\newtheorem{assumption}{Assumption}
\newtheorem{theorem}{Theorem}
\newtheorem{lemma}{Lemma}
\newtheorem{remark}{Remark}
\begin{document}

\title{Testing for equal predictive accuracy with strong dependence
}
\author[1]{Laura Coroneo}
\author[2,1]{Fabrizio Iacone\thanks{Corresponding author. Contact: Department of Economics, Management and Quantitative Methods, Universit\`{a} degli Studi di Milano Via Conservatorio 7, 20122 Milano. Email: fabrizio.iacone@unimi.it. The authors thank the Editor, Esther Ruiz, the associate editor and two anonymous referees for their constructive and insightful comments, which helped improve the paper substantially. We also thank Roberto Casarin, Gergley G\'{a}nics, Raffaella Giacomini, Liudas Giraitis, Anders Kock, Andrey Vasnev, and participants to the 7th RCEA Time Series Workshop (University of Milan Bicocca, 2021), to the 42nd International Symposium on Forecasting (ISF 2022 - University of Oxford), and the Bergamo Workshop of Econometrics and Statistics for useful comments.}}
\affil[1]{University of York}
\affil[2]{Universit\`{a} degli Studi di Milano}
\date{\today}

\maketitle
\begin{abstract}
\noindent 
We analyse the properties of the \citeasnoun{diebold1995comparing} test in the presence of autocorrelation in the loss differential.  We show that the power of the \citeasnoun{diebold1995comparing} test decreases as the dependence increases, making it more difficult to obtain statistically significant evidence of superior predictive ability against less accurate benchmarks. We also find that, after a certain threshold, the test has no power and the correct null hypothesis is spuriously rejected. Taken together, these results caution to seriously consider the dependence properties of the loss differential before the application of the \citeasnoun{diebold1995comparing} test.

\end{abstract}
\par\indent
\textit{JEL classification codes}: C12; C32; C53.
\par\indent
\textit{Keywords}: strong autocorrelation, forecast evaluation, Diebold and Mariano test.
\par \indent \textit{Total Words}: 5,744
\clearpage
\section{Introduction}\label{sec_intro}
Accurate forecasts are extremely important for forward-looking decision making. Weather forecasts often have a dedicated section even on the daily news, and  predictions of the diffusion of the COVID-19 pandemic have had a critical impact on the life of most people. In economics, decisions over individual savings, firm-level investments, government fiscal policies and central bank monetary policies rely on forecasts of, among others, future economic activity and price levels. 

To discriminate  between good and bad forecasts, \citeasnoun{diebold1995comparing} [DM hereafter] suggested  comparing alternative forecasts using a test for equal predictive ability. The DM test  is based on a loss function associated with the forecast errors of each forecast, and allows to  test the null hypothesis of zero expected loss differential between two competing forecasts.  This approach takes forecast errors as model-free, and the test is valid also when the forecasts are produced from unknown models, as for example from forecast survey data. In addition, if the objective is to compare forecasting methods as opposed to forecasting models, then \citeasnoun{giacomini2006tests} showed that in an environment with asymptotically non-vanishing estimation uncertainty the DM test can still be applied. 

The DM test allows us to test for equal predictive accuracy using any loss function, and the test statistic is asymptotically valid for contemporaneously correlated, serially correlated, and non-normal forecast errors. The test relies on the assumption that the loss differential is weakly dependent.
The rationale for this assumption is that, under mean squared error (MSE) loss, optimal $q$-step ahead forecasts should generate at most MA($q-1$) errors. Thus if the considered forecast approximates the optimal forecast, its forecast errors should not be  too correlated over time, although  dependence beyond the MA($q-1$) boundary may take place. In practice, forecasts with errors that are fairly correlated can  occur  not only when  the considered forecast fails to approximate the optimal forecast under MSE loss,  but also when 
the forecast is optimal under an alternative loss function \citeaffixed{patton2007properties}{see} or it is evaluated on a relatively short sample. 

Still, we can encounter situations in which the loss differential is highly autocorrelated even in the presence of a  prediction with weakly dependent forecast errors and in samples of moderate size. This can happen when the DM test is used to compare the predictive ability of a selected forecast against a naive benchmark. 
This is a common practice, as naive benchmarks are cost-effective and readily available at any time, so they provide a standard reference for comparisons. Using simple benchmarks  allows us to understand the added value of a specific forecasting technique, as it is desirable that predictions from sophisticated forecasting methods (for example complex models or expensive surveys) are more accurate than naive benchmarks.  However, naive forecasts  may in some cases generate relevant autocorrelation  in the loss differential.    

In this paper, we study the performance of the DM test when the assumption of weak autocorrelation of the loss differential does not  hold. We characterise strong dependence as local to unity as in \citeasnoun{Phillips1987LocUnity} and \citeasnoun{PhillipsMagd2007MildlyIntegrated}. This definition is at odds with the more popular characterisation in the literature that treats strong autocorrelation and long memory as synonyms. Local to unity, however, seems well suited to derive reliable guidance when the sample is not very large, as it is the case in many applications in economics. With this definition the strength of the dependence is determined also by the sample size: a stationary AR(1) process with root close to unity may be treated as weakly dependent in a very large sample, but standard asymptotics may be a poor guidance for cases with smaller samples, and local to unity asymptotics may be more informative. We show that the power of the DM test decreases as the dependence increases, making it more difficult to obtain statistically significant evidence of superior predictive ability against less accurate benchmarks.  We also find that after a certain threshold the test has no power and the correct null hypothesis is spuriously rejected.   \Copy{R2c7t}{These results caution us to seriously consider the dependence properties of the loss differential before the application of the DM test, especially when naive benchmarks are considered. In this respect, a unit root test could be a valuable diagnostic for the preliminary detection of critical situations.}\label{R2c7}

To  illustrate  the  problems  associated  with  the  DM  test  when  there  is  dependence in the loss differential,  we consider the case in which an AR(1)  forecast  for inflation in the Euro Area  is compared to two naive benchmarks: a constant 2\% prediction (that represents the inflation target in the Euro Area) and a rolling average prediction. These benchmark predictions have highly dependent forecast errors. As a consequence, the loss differential is dependent and the DM test fails to reject the null of equal predictive accuracy, even if the benchmarks are less accurate than the  AR(1) forecast  for short forecasting horizons.

In the literature, there has been some attention to the issue of forecast evaluation in presence of persistence.  \citeasnoun{corradi2001predictive} examined the DM statistic in the presence of cointegration, whereas  \citeasnoun{rossi2005testing} examined the effect of high persistence on the loss differential. \citeasnoun{mccracken2020diverging} provided an example  to show that using a fixed and finite estimation window  can result in loss differentials that depend on the first observations, so that the time series of the loss differential in the DM test is not ergodic for the mean. These works considered a framework with  parameter estimation error;  instead \citeasnoun{clark1999finite}, 
\citeasnoun{Khalaf2017} and
\citeasnoun{kruse2019comparing} took forecast errors as primitives. \citeasnoun{clark1999finite} and 
\citeasnoun{Khalaf2017}, in particular, provided convincing simulation evidence that the DM test is incorrectly sized in the presence of dependence. \citeasnoun{giacomini2006tests} and 
\citeasnoun{coroneo2020comparing} showed that under benign forms of weak dependence, the correct size may be restored using bootstrap or fixed smoothing asymptotics, respectively, but results in \citeasnoun{Khalaf2017} suggest that even bootstrap is only a useful and reliable guidance when the depedence is not too strong, relative to the sample size.
On the other hand \citeasnoun{kruse2019comparing}
derived the properties of the DM test in the presence of long memory using standard asymptotics, and memory and autocorrelation consistent standardisation. \citeasnoun{kruse2019comparing} had a sample of 4883 observations, so long memory seems a reasonable modelling strategy in their case. However, this approach cannot be applied to moderate sample size, such as the ones typically encountered in macroeconomic forecasting.

Of course, the DM test can be seen as a particular application of the standard $t$-test on the mean in presence of dependence. A certain level of persistence can be accommodated for the $t$-test using bootstrap or alternative asymptotics, see for example \citeasnoun{gonccalves2011block} for the former, and
\citeasnoun{kiefer2005new}, \citeasnoun{sun2014let}, \citeasnoun{sun2014fixed}, \citeasnoun{lazarus2018har}  for the latter. Following \citeasnoun{muller2014hac} or \citeasnoun{GiraitisPhillips2012Local},
it is clear that the even the limit distribution in \citeasnoun{kiefer2005new} and related works does not provide a useful guidance when the persistence is too strong in relation to the sample size. \citeasnoun{muller2014hac}  does provide an algorithm to perform a reliable test in that case, although the sample size required is larger than the samples that are usually available in forecast evaluation exercises. \label{Rass2} As a promising alternative option, \citeasnoun{henzi2021valid}  and \citeasnoun{choe2023comparing} recently proposed procedures based on sequential testing that do not require assumptions on the dependence of the process. These new procedures may therefore be robust even in situations of dependence, however the assumption of bounded loss differentials may limit their applicability.

As it is clear from this review of the literature, the core of the statistical results discussed here should not come as a surprise. However, their implication  in the context of testing for equal predictive ability remains relevant, in particular  when good forecasts are compared against poor benchmarks, with relevant autocorrelations, as  our results suggest that in these cases the blind application of the DM test leads to incorrect conclusions. Even more importantly, it may be more difficult to reject the null hypothesis when good forecasts are compared to poor competitors, than when the same good forecasts are compared to competitors that are nearly as good. This perverse and undesirable feature of the DM test should be kept in mind in forecast evaluation exercises.

The paper is organised as follows. We formally introduce the DM test in Section \ref{sec_DMtest}, and derive the limit properties of the DM statistics in presence of dependence in Section \ref{sec_theory}. We investigate the practical implication of our theoretical findings in a Monte Carlo exercise (Section \ref{sec:MC}) and in the empirical application (Section \ref{sec:emp}). Details on the assumptions of the DGP and formal derivations are in the Appendix.

\section{DM test}\label{sec_DMtest}
The DM test was introduced to compare two forecasts of a time series, according to a user chosen loss metric. For $t=\{1, \hdots, T\}$, denoting the forecast errors as $e_{1,t}$ and $e_{2,t}$, respectively, and the loss function $L(.)$, \citeasnoun{diebold1995comparing} 
consider the loss differential  
\begin{equation} \label{LossDiff}
    d_t=L(e_{1,t})-L(e_{2,t})
\end{equation}
and test the null hypothesis of equal predictive ability $H_0: E(d_t)=0$ against the alternative $H_1: E(d_t) \neq 0$.
The key assumptions by \citeasnoun{diebold1995comparing} and \citeasnoun{diebold2015comparing} are that 
$d_t$ is stationary and weakly dependent, and that the average loss, $\overline{d}=\frac{1}{T} \sum_{t=1}^{T}{d_t}$, follows a Central Limit Theorem. In particular, denoting $\mu=E(d_t)$, it is assumed that $\sqrt{T}{(\overline{d}-\mu)} \rightarrow_d {N(0,{\sigma^2})}$ as $T \rightarrow \infty$, where $0< \sigma^2<\infty $
is the long run variance of $d_t$. 

Thus, inference on $E(d_t)$ can be based on the normalised limit \begin{equation} \label{standardCLT}
    \sqrt{T}\frac{(\overline{d}-\mu)}{\sigma} \rightarrow_d {N(0,1)}.
\end{equation}
Denoting $\widehat{\sigma}^2$ an estimate of $\sigma^2$,  \hspace{0.1cm}
$\widehat{\sigma}=\sqrt{\widehat{\sigma}^2}$, then the classical DM test uses the statistic
\begin{equation*}
DM=\sqrt{T}\frac{\overline{d}}{\widehat{\sigma}} 
\end{equation*}
where the null hypothesis of equal predictive ability is rejected at 5\% significance level against a two-sided alternative if the realization of $|DM|$ is above the 1.96 threshold. 

The original DM test exploits the consistency of $\widehat{\sigma}^2$ to justify the standard normal as the limit distribution under the null. This may generate rather poor size performance  in finite sample, see \citeasnoun{diebold1995comparing} and also \citeasnoun{clark1999finite}. 
With fixed smoothing asymptotics, the limit for $\widehat{\sigma}^2$ is derived under alternative asymptotics, accounting for the distribution of $\widehat{\sigma}^2$ and, as a consequence,  the DM statistic does not have limit standard normal distribution, but the alternative limit provides a better approximation of the distribution of the DM statistic in finite samples.
As the alternative distribution depends on the way $\sigma^2$ is estimated, we consider two cases: the weighted autocovariance estimate using the Bartlett kernel and the weighted periodogram estimate using the  Daniell kernel. 

Denoting by $\widehat{\gamma}_l$ the sample autocovariance of lag $l$,  the weighted autocovariance estimate of the long run variance using the Bartlett kernel is
\begin{equation}\label{bartlett}
\widehat{\sigma}_A^2=\widehat{\gamma}_0+2 \sum_{l=1}^M {\frac{M-l}{M}\widehat{\gamma}_l}.
\end{equation} 
where $M$ is a user-chosen bandwidth parameter, and
\begin{equation*}
DM_A=\sqrt{T}\frac{\overline{d}}{\widehat{\sigma}_A}.
\end{equation*}
Under $H_0$, 
\begin{align*}
  & DM_A\rightarrow_d N(0,1), \text{ if }
  1/M+M/T  \rightarrow 0 \text{ as } T\rightarrow \infty,\\
 & DM_A\rightarrow_d \Phi_A(b),  \text{ if }
  M/T  \rightarrow b \in (0,1] \text{ as } T\rightarrow \infty
\end{align*}
where the distribution of $\Phi_A(b)$ depends on $b$; this is characterised in 
\citeasnoun{kiefer2005new}, where relevant quantiles are also provided.  

For the weighted periodogram estimate, 
denoting $w(\lambda)=\frac{1}{\sqrt{2 \pi T}} \sum_{t=1} ^ {T} d_t e^{i \lambda t}$ the Fourier transform of $d_t$ at frequency $\lambda$, and $I(\lambda)=|w(\lambda)|^2$ as the periodogram, the Daniell weighted periodogram estimate is
\begin{equation} \label{Daniell}
    \widehat{\sigma}^2_P=\frac{2 \pi}{m}\sum_{j=1} ^m I(\lambda_j)   
\end{equation}
where, for integer $j$, $\lambda_j=\frac{2 \pi j}{T}$ are the Fourier frequencies and $m$ is a user-chosen bandwidth parameter. The test statistic is then given by
\begin{equation}\label{fix_m}
DM_P = \sqrt{T}\frac{\overline{d}}{\widehat{\sigma}_P}.
\end{equation}
Under $H_0$,
\begin{align*}
  & DM_P\rightarrow_d N(0,1), \text{ if }
  1/m+m/T  \rightarrow 0 \text{ as } T\rightarrow \infty,\\
 & DM_P\rightarrow_d t_{2m},  \text{ if }
  m \text{ is fixed }  \text{ as } T\rightarrow \infty
\end{align*}
where $t_{2m}$ is the Student's $t$-distribution with $2m$ degrees of freedom, \label{Rass} see \citeasnoun{coroneo2020comparing} for more details.

\section{The DM statistic  with dependence}\label{sec_theory}
The key assumption in the construction of the DM test is  that  $d_t$ is  weakly dependent. This assumption seems reasonable in the context of forecasts, as it is well known that, under MSE loss, optimal $q-$step ahead forecasts should be at most $MA(q-1)$. For this very reason,  \citeasnoun{diebold1995comparing} considered estimating $\sigma^2$ using only the first $q-1$ autocovariances of $d_t$, and verified that this assumption was met in the data in the empirical application that they presented. 

However, in practice, it is not uncommon to have strong autocorrelation in the loss differential.
This can happen in presence of forecasts that are optimal under alternative loss functions, or when the forecast evaluation sample $T$ is short. In addition, it is  common practice to apply the DM test to test for equal predictive accuracy of a selected forecast against a naive benchmark, resulting in dependent forecast errors and loss differential, possibly even strongly  autocorrelated. Section~\ref{sec:MC} contains an example illustrating how stochastically trending behavior may appear in the loss differentials.
\label{p4.2}

Denoting $\mu=E(d_t)$ and $y_t=d_t-\mu$, so that
\begin{equation}
    d_t= \mu + y_t \label{dt},
\end{equation}
we assume that 
\begin{equation}
           y_t=\rho_T y_{t-1}+u_t \label{ar1}
\end{equation}
where $u_t$ is a zero mean, weakly dependent process with long run variance $\omega$. We consider two different models for $\rho_T$: in subsection \ref{Subsec_Local} we discuss the local-to-unity AR(1) approximation as in \citeasnoun{Phillips1987LocUnity} (alongside with the standard unit root model); in subsection \ref{Subsec_mildly} we consider the moderate deviations from a unit root as in \citeasnoun{PhillipsMagd2007MildlyIntegrated}. These models are a convenient representation of  dependence for $y_t$ when the dimension in $T$  is relatively short, as it is indeed the case in many empirical studies. In both cases, we refer to Appendix~\ref{app_proofs} for a detailed presentation and discussion of the assumptions, and for the derivation of the results.

\subsection{Local to unity autocorrelation} \label{Subsec_Local}
In this case, we assume for $\rho_t$ in \eqref{ar1}
 \begin{equation} \label{rho_T}
    \rho_T=e^{c/T}  \quad \text{with  } c \leq 0.   
\end{equation}
When $c$ is in the neighbourhood of $0$, $\rho_T$ is approximated as $1+{c/T}$, i.e. $    \rho_T \sim 1+{c/T}$  as $ c \rightarrow 0.$
When $c=0$ then the process $y_t$ has a unit root, and it is initialised setting the initial condition $y_0=O_p(1)$.

Our model is completed by formalising the assumptions on $u_t$.
\begin{assumption}\label{ass1}
Let $\varepsilon_t$ be independent and identically distributed (iid) random variables, with $E(\varepsilon_t)=0$,
$E(\varepsilon_t^2)=\varsigma^2$.
Then, assume that $u_t=\sum_{j=0}^{\infty}
\psi_j \varepsilon_{t-j}   
$ is such that
\begin{equation*}
\left(\sum_{j=0}^{\infty}
\psi_j \right)^2>0
\text{,  }\sum_{j=0}^{\infty}
 j^{1/2}|\psi_j|<\infty.
\end{equation*}
\end{assumption}

Denoting $g(\lambda)$ as the spectral density of $u_t$,  Assumption~\ref{ass1} implies that $g(0)>0$ and that $g(\lambda)<\infty$ uniformly in $\lambda$. Assumption~\ref{ass1} is sufficient to establish the functional central limit theorem (FCLT) for a stationary, weakly dependent linear process as in  \citeasnoun{PhillipsSolo1992}, see remark 3.5 of \citeasnoun{PhillipsSolo1992}, and notice that condition $\sum_{j=0}^{\infty}
 j^{1/2}|\psi_j|<\infty$ implies (16) of \citeasnoun{PhillipsSolo1992}, as discussed on page 973. 

Define 
\begin{equation*}
    J_c(r)= \int_0^r e^{(r-s)c} dW(s)
\end{equation*}
where $W(r)$ is a standard Brownian motion. The process $J_c(r)$ is a Ornstein-Uhlenbeck process: for given $r$ it is normally distributed (when $c=0$ the Ornstein-Uhlenbeck process is the standard Brownian motion). We refer to \citeasnoun{Phillips1987LocUnity} for a detailed discussion, but we state some important results from Lemma 1 of \citeasnoun{Phillips1987LocUnity}:
\begin{align}
  & T^{-1/2} y_{\lfloor rT \rfloor} \rightarrow 
  \omega J_c(r) \label{Lemma1aP}\\ 
  & T^{-1/2}  \overline{y} \rightarrow \omega \int_0 ^1 J_c(r)dr \label{Lemma1bP}\\
  & T^{-2} \sum_{t=1}^T {y_t^2} \rightarrow \omega^2 \int_0 ^1 J_c(r)^2dr \label{Lemma1cP}
\end{align}
where $J_c(r)=W(r)$ when $c=0$. The limit in (\ref{Lemma1aP}) follows proceeding as in \citeasnoun{Phillips1987LocUnity} but using the FCLT for linear processes instead of for mixing sequencies (also see \citeasnoun{Chan_and_Wei});  (\ref{Lemma1bP}) and (\ref{Lemma1cP}) are then due to  the continuous mapping theorem. The result for $\overline{y}$ in particular means that the sample average is not consistent in the neighbourhood of a unit root. 

Denoting 
\begin{equation*}
    \overline{J_c}=\int_0 ^1 J_c(r)dr,
\end{equation*} 
we can now establish the limit properties of the $DM_A$ and $DM_P$ statistics.

\begin{theorem} \label{DF-standard}
For $d_t$ generated as in \eqref{dt}-\eqref{rho_T}, and under Assumption~\ref{ass1}, 
\begin{itemize}
    \item[Case A:] For $1/M+M/T \rightarrow 0$ as $T \rightarrow \infty$
\begin{equation}\label{SpuriousDM_A}
\frac{\sqrt{M}}{\sqrt{T}} DM_A \rightarrow_d \frac{\overline{J}_c}{\sqrt{\int_0^1 \left( {J}_c(r)^2-\overline{J}_c^2 \right) dr}}
\end{equation}
    \item[Case P:] For $1/m+m/T \rightarrow 0$ as $T \rightarrow \infty$
    \begin{equation}\label{SpuriousDM_P}
    \frac{1}{\sqrt{m}} DM_P \rightarrow_d  \frac{{\overline{J}_c}}{\sqrt{\frac{1}{2}\int_0 ^1  \left(J_c(r)^2-\overline{J_c}^2\right) dr}} 
\end{equation}
\end{itemize}
\end{theorem}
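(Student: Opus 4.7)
The plan is to combine the functional limit results (\ref{Lemma1aP})--(\ref{Lemma1cP}) with a continuous mapping argument applied to both the sample mean and the two long-run variance estimators, and then pass to the ratio via the joint convergence of numerator and denominator as functionals of the limiting Ornstein--Uhlenbeck process $J_c$. Throughout, since $\overline{d}=\mu+\overline{y}$ and $\mu=O(1)$, relation (\ref{Lemma1bP}) gives $T^{-1/2}\overline{d}\to_d \omega\overline{J}_c$; in particular the sample mean diverges at rate $\sqrt{T}$ and produces the numerator $\overline{J}_c$ in both cases.

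For Case A, I would first show that whenever $l/T\to 0$ the sample autocovariance satisfies $\hat{\gamma}_l/T\to_d \omega^2 V_c$, where $V_c := \int_0^1(J_c(r)-\overline{J}_c)^2\,dr$, by writing $\hat{\gamma}_l = T^{-1}\sum_{t=1}^{T-l}(y_t-\overline{y})(y_{t+l}-\overline{y})$ as a bilinear functional of $y$ and substituting the FCLT $T^{-1/2}y_{\lfloor rT\rfloor}\to_d \omega J_c(r)$: (\ref{Lemma1cP}) controls the quadratic part and (\ref{Lemma1bP}) the mean correction, while the a.s.\ uniform continuity of $J_c$ on $[0,1]$ upgrades the convergence to uniformity over $l\le M$ as $M/T\to 0$. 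Summing against the Bartlett weights and using $\sum_{l=1}^M(1-l/M) = (M-1)/2$,
\[
\hat{\sigma}_A^2 = \hat{\gamma}_0 + 2\sum_{l=1}^M \frac{M-l}{M}\hat{\gamma}_l = TM\,\omega^2 V_c\,(1+o_p(1)).
\]
Substituting into $\sqrt{M/T}\,DM_A = \sqrt{M}\,\overline{d}/\hat{\sigma}_A = (\overline{d}/\sqrt{T})/(\omega\sqrt{V_c})(1+o_p(1))$ and invoking the continuous mapping theorem yields the stated limit $\overline{J}_c/\sqrt{V_c}$.

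For Case P, at any Fourier frequency $\lambda_j = 2\pi j/T$ with $j\ge 1$ the constant $\mu$ is annihilated, so $w(\lambda_j) = (2\pi T)^{-1/2}\sum_{t=1}^T y_t\,e^{i\lambda_j t}$. Writing the sum as a Riemann sum and inserting the FCLT gives $T^{-3/2}\sum_{t=1}^T y_t e^{i\lambda_j t}\to_d \omega c_j$, where $c_j := \int_0^1 J_c(r)\,e^{i 2\pi j r}\,dr$. Hence $I(\lambda_j)/T^2\to_d \omega^2|c_j|^2/(2\pi)$, and summing,
\[
\frac{m\,\hat{\sigma}_P^2}{T^2} = \omega^2 \sum_{j=1}^m |c_j|^2 + o_p(1).
\]
Since $J_c$ is real-valued, Parseval's identity on $L^2[0,1]$ gives $\sum_{j=1}^\infty |c_j|^2 = \tfrac{1}{2}\bigl(\int_0^1 J_c(r)^2\,dr - \overline{J}_c^{\,2}\bigr) = V_c/2$, with tail $\sum_{j>m}|c_j|^2\to 0$ a.s.\ as $m\to\infty$. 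Combining with $\overline{d}/\sqrt{T}\to_d\omega\overline{J}_c$ through the continuous mapping theorem,
$DM_P/\sqrt{m} = (\overline{d}/\sqrt{T})\big/(\sqrt{m}\,\hat{\sigma}_P/T)\to_d \overline{J}_c/\sqrt{V_c/2}$, which is the stated limit.

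The main obstacle I anticipate is the two uniformity arguments required when the bandwidths grow: in Case A, upgrading the pointwise convergence $\hat{\gamma}_l/T\to_d\omega^2 V_c$ to uniformity over $l\le M$ with $M/T\to 0$; and in Case P, exchanging limits to push $m\to\infty$ while controlling the tail $\sum_{j>m}|c_j|^2$ of the partial Parseval sum. Both reduce to weak convergence of $y_\cdot$ in $D[0,1]$ followed by continuous mapping on smooth functionals of $J_c$, but the bookkeeping of the rates with $M$ or $m$ growing and $M/T\to 0$ or $m/T\to 0$ is where the care is needed.
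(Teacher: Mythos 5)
Your overall architecture is sound and your Case A argument is essentially the paper's: the appendix likewise shows $\widehat{\gamma}_l/T \rightarrow_d \omega^2\int_0^1(J_c(r)^2-\overline{J}_c^2)\,dr$ uniformly over $l\le M$ (via the decomposition $y_t-\overline{y}=(y_{t-l}-\overline{y})-(1-\rho^l)y_{t-l}+\sum_{j=0}^{l-1}\rho^j u_{t-j}$, which yields explicit $O_p(M/T)$ and $O_p(M^{1/2}/T^{1/2})$ error rates rather than an appeal to path continuity of $J_c$), then sums the Bartlett weights to get $\widehat{\sigma}_A^2/(TM)\rightarrow_d\omega^2 V_c$. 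Your Case P, however, takes a genuinely different route: you prove term-by-term convergence $I(\lambda_j)/T^2\rightarrow_d\omega^2|c_j|^2/(2\pi)$ and invoke Parseval \emph{in the limit}, whereas the paper applies the exact finite-sample Parseval identity to the \emph{full} set of Fourier frequencies, $\frac{2\pi}{T^2}\sum_{j=1}^{T/2}I(\lambda_j)=\frac{1}{2T^2}\sum_{t=1}^T(y_t-\overline{y})^2$, which converges directly to $\omega^2 V_c/2$ by \eqref{Lemma1bP}--\eqref{Lemma1cP}, and then shows the discarded band is negligible via $\frac{2\pi}{T^2}\sum_{j=m+1}^{T/2}I(\lambda_j)=O_p(m^{-1})$.

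The obstacle you flag at the end of your Case P argument is not mere bookkeeping; it is the crux, and as written your proposal does not close it. Controlling the a.s.\ tail $\sum_{j>m}|c_j|^2$ of the limiting Fourier coefficients says nothing about the pre-limit error $\sum_{j=1}^m\bigl(2\pi I(\lambda_j)/T^2-\omega^2|c_j|^2\bigr)$ when the number of summands $m$ grows with $T$: continuous mapping gives you each term, and even any fixed finite block, but not a sum of $m\rightarrow\infty$ terms without a uniform-in-$j$ bound on the individual errors. The ingredient that makes this work is a bound of the form $I(\lambda_j)=O_p(j^{-2}T^2)$ holding uniformly over the whole band $1\le j\le T/2$ --- the paper's Lemma~\ref{ExpectedPeriodogram}, proved by a Fej\'{e}r-kernel smoothing argument in the style of \citeasnoun{Robinson95log-p} --- and once you have it, the economical way to deploy it is precisely the paper's decomposition (exact Parseval on the full sum minus a summable tail), which sidesteps term-by-term convergence entirely. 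I would recommend restructuring your Case P along those lines; the rest of the proposal (the treatment of the numerator, the annihilation of $\mu$ at nonzero Fourier frequencies, the Parseval accounting giving the factor $\tfrac{1}{2}$, and the joint convergence of numerator and denominator as functionals of the same path) is correct.
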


We refer to Appendix~\ref{app_proofs} for a more detailed derivation of this and other results in this section. In view of \eqref{SpuriousDM_A} or \eqref{SpuriousDM_P}, 
as $M/T \rightarrow 0$ or $m \rightarrow \infty$ the DM test statistic diverges even when the null hypothesis is correct, thus giving spurious evidence of superior predictive ability. As we interpret the local to unit root as an approximation of an AR(1) in finite sample, this result suggests that the DM test diverges in presence of a root that is stationary but close to 1.  

Next, we present the limit of the $DM_A$ and $DM_P$ statistics using fixed smoothing asymptotic. \\
Denoting
\begin{align*}
& \widetilde{J}_c(r)={J}_c(r)-r {J}_c(1)\\
& Q_A(b)=\frac{2}{b} \int_0^1 \widetilde{J}_c(r)^2dr-\frac{2}{b}\int_0^{1-b} {\widetilde{J}_c(r)\widetilde{J}_c(r+b)}dr\\
& Q_P(j)= (2 \pi j)^2 \left\{ \left( \int_{0}^{1} sin(2 \pi j r) \widetilde{J}_c(r) dr \right)^2
+ \left( \int_{0}^{1} cos(2 \pi j r) \widetilde{J}_c(r) dr \right)^2   \right\}
\end{align*}

\begin{theorem} \label{DF-fixed}
\textit{For $d_t$ generated as in \eqref{dt}-\eqref{rho_T}, and under Assumption~\ref{ass1},
\begin{itemize}
\item[Case A:] For $M/T \rightarrow b\in(0,1]$ as $T \rightarrow \infty$,
\begin{equation}\label{inconsistentDMA}
DM_A \rightarrow_d \frac{\overline{J}_c}{\sqrt{Q_A(b)}}
\end{equation}
    \item[Case P:] For $m$ fixed as $T \rightarrow \infty$ then 
\begin{equation}\label{inconsistentDMP}
   DM_P \rightarrow_d \frac{\overline{J_c}}{\sqrt{\frac{1}{m} \sum_{j=1}^m Q_{P}(j)}}.
\end{equation}
\end{itemize}}
\end{theorem}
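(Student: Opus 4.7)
The plan is to rescale the numerator and denominator of each $DM$ statistic so that both converge to nondegenerate limits, then apply the continuous mapping theorem jointly. Under $H_0$ we have $\bar{d} = \bar{y}$, so by \eqref{Lemma1bP}, $(\sqrt{T}\bar{d})/T = \bar{y}/\sqrt{T} \to_d \omega \overline{J}_c$. Since $DM_A = \sqrt{T}\bar{d}/\widehat{\sigma}_A = (\sqrt{T}\bar{d}/T)/(\widehat{\sigma}_A/T)$ and similarly for $DM_P$, it remains to show that $\widehat{\sigma}_A/T \to_d \omega\sqrt{Q_A(b)}$ and $\widehat{\sigma}_P/T \to_d \omega\sqrt{m^{-1}\sum_{j=1}^m Q_P(j)}$, jointly with the convergence of $\bar{y}$, and then conclude by Slutsky.

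For Case A, I would use the Kiefer--Vogelsang representation of the Bartlett estimator as a quadratic form in the demeaned partial sums $\widehat{S}_t = \sum_{s=1}^t (d_s - \bar{d}) = \sum_{s=1}^t (y_s - \bar{y})$. Assumption~\ref{ass1} and \eqref{Lemma1aP}--\eqref{Lemma1bP} give $T^{-1/2}(y_{\lfloor rT\rfloor} - \bar{y}) \to \omega(J_c(r) - \overline{J}_c)$ in the Skorokhod sense, so with $M = bT$, the continuous mapping theorem applied to the fixed-$b$ KV functional yields $\widehat{\sigma}_A^2/T^2 \to_d \omega^2 Q_A(b)$, after identifying the resulting functional of demeaned OU paths with the stated form in terms of $\widetilde{J}_c$.

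For Case P, observe that at Fourier frequencies $\lambda_j = 2\pi j/T$ with $j \geq 1$, the identity $\sum_{t=1}^T e^{i\lambda_j t} = 0$ implies $w(\lambda_j) = (2\pi T)^{-1/2}\sum_{t=1}^T y_t e^{i\lambda_j t}$. For each fixed $j$, the limit of $I(\lambda_j)/T^2$ can be extracted by summation by parts: setting $\Psi_t = \sum_{s=1}^t e^{i\lambda_j s}$ (so $\Psi_T = 0$), one obtains $\sum y_t e^{i\lambda_j t} = -\sum_{t=1}^{T-1}(y_{t+1} - y_t)\Psi_t$. Substituting $y_{t+1} - y_t = (c/T)y_t + u_{t+1} + O_p(T^{-2})y_t$ and using $\Psi_{\lfloor rT\rfloor}/T \to (e^{i2\pi jr} - 1)/(i2\pi j)$ uniformly, the joint FCLT together with Phillips--Solo approximations for the innovation term identifies $\sum y_t e^{i\lambda_j t}/T^{3/2}$ with a stochastic integral driven by $dJ_c(r) = cJ_c(r)dr + dW(r)$. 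Joint convergence across $j = 1,\ldots,m$ then yields $\widehat{\sigma}_P^2/T^2 \to_d \omega^2 m^{-1}\sum_{j=1}^m Q_P(j)$, from which \eqref{inconsistentDMP} follows.

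The main obstacle is the careful identification of these limit functionals with the expressions $Q_A(b)$ and $Q_P(j)$ written in terms of the endpoint-centered process $\widetilde{J}_c(r) = J_c(r) - rJ_c(1)$. This relies on integration-by-parts identities for stochastic integrals against $dJ_c$, exploiting $\widetilde{J}_c(0) = \widetilde{J}_c(1) = 0$ and the orthogonality $\int_0^1 e^{i 2\pi j r}dr = 0$ for integer $j \geq 1$. Tracking boundary terms from the initial condition $y_0 = O_p(1)$ and verifying joint convergence across all kernel lags (Case A) or across Fourier frequencies (Case P) complete the technical work.
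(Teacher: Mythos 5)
Your proposal takes essentially the same route as the paper: write the statistic as the ratio of $\sqrt{T}\overline{d}/T$ to $\widehat{\sigma}/T$, obtain $\sqrt{T}\overline{d}/T \rightarrow_d \omega\overline{J}_c$ from \eqref{Lemma1bP}, establish the corresponding limit for the rescaled variance estimator, and conclude by joint convergence and the continuous mapping theorem. The only real difference is that where the paper outsources the denominator limits --- to \citeasnoun{kiefer2005new} for Case A and to Lemma 1 of \citeasnoun{hualde2017fixed} for Case P --- you sketch first-principles derivations (the quadratic form in demeaned partial sums for the Bartlett estimator; Abel summation with $\Psi_T=0$ and the Phillips--Solo device for the discrete Fourier transform at fixed $j$), which are exactly the devices underlying those citations. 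Two cautions. First, you assume $H_0$ so that $\overline{d}=\overline{y}$; the theorem (see item (c) of Remark~\ref{REM_Local}) is asserted for arbitrary $\mu$, and the paper's proof covers this by observing that $\sqrt{T}\mu/T=\mu/\sqrt{T}\rightarrow 0$ --- a one-line addition you should make. Second, the ``identification'' step you defer is not cosmetic: your own computation delivers
\begin{equation*}
T^{-3/2}\sum_{t=1}^{T} y_t e^{i\lambda_j t}\rightarrow_d -\omega\int_0^1\frac{e^{i2\pi jr}-1}{i2\pi j}\,dJ_c(r)=\omega\int_0^1 e^{i2\pi jr}J_c(r)\,dr,
\end{equation*}
whereas integrating by parts in the stated functional (using $\widetilde{J}_c(0)=\widetilde{J}_c(1)=0$) gives $Q_P(j)=\bigl|\int_0^1 e^{i2\pi jr}\,dJ_c(r)\bigr|^2$, which is the form appropriate for a process whose \emph{partial sums} converge to $J_c$. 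Reconciling these two expressions (and the analogous point for $Q_A(b)$, where the Kiefer--Vogelsang functional is naturally evaluated at the bridge of $\int_0^r J_c(s)\,ds$ rather than at $J_c(r)-rJ_c(1)$) is precisely the content of the step you postpone, so it must be carried out explicitly rather than asserted; as it stands, the limits your machinery produces and the expressions in the theorem statement do not visibly coincide.
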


When $c=0$, so that the process $y_t$ is characterised by a unit root, the limit distribution $\widetilde{J}_c(r)$ is replaced by
$\widetilde{W}(r)={W}(r)-r {W}(1)$
where $W(r)$ is a standard Brownian motion, and $\overline{J_c}$ is replaced by $\overline{W}=\int_0 ^1 W(r)dr$.
 
The limits in \eqref{inconsistentDMA} and \eqref{inconsistentDMP} exhibit the self-normalization property as in \citeasnoun{Shao2015Selfnormalization}. It is interesting to compare the limits in Theorem \ref{DF-standard} and in Theorem \ref{DF-fixed} with results in the literature. It is well known that the standardised mean is diverging in presence of strongly autocorrelated series when $m \rightarrow \infty$ is assumed (and an analogue result would hold when $M/T \rightarrow 0$), but not when $m$ is assumed fixed, see for example \citeasnoun{McElroyPolitis2012ET} and \citeasnoun{hualde2017fixed}. In the context of local to unity process, this result was established, for example, in  \citeasnoun{Sun2014EssayPCB}. The advantages of series variance estimators of the long run variance in presence of autocorrelation is also discussed in \citeasnoun{MULLER2007LRVestimates}. Our interest in the limits in Theorem \ref{DF-standard} and in Theorem \ref{DF-fixed} is, however, slightly different, as we do not see these as alternative limits under different asymptotics, but rather as guidance that show properties of the DM test for relatively small and large $m$ (large $M/T$ and small $M/T$).  The limits in Theorem \ref{DF-standard} and in Theorem \ref{DF-fixed} are only relevant in the sense that we can establish whether the test statistic is divergent or not.

\begin{remark}\label{REM_Local}
Results in  in Theorem \ref{DF-standard} and in Theorem \ref{DF-fixed} indicate  that: \begin{enumerate}
\item With relatively large values for $m$ (or equivalently small $M/T$), the DM statistics diverges even under $H_0$ (spurious significance).
\item With small values for $m$ (or equivalently large $M/T$),
the DM statistics does not diverge even under $H_1$, so the test is not consistent. However, as the distribution in \eqref{inconsistentDMP} has much thicker tails than a $t_{2m}$ distribution (and similarly for the distribution in \eqref{inconsistentDMA} with respect to the $\Phi_A(b)$ distribution),  then it is still possible (and indeed it may be frequent) to have many spurious rejections of the null hypothesis. 

\item The limits in Theorem \ref{DF-standard} and in Theorem \ref{DF-fixed} hold regardless of whether $\mu=0$ or $\mu \neq 0$, so they are not affected by whether the null hypothesis is correct or not. This means that  the DM test cannot discriminate between null and alternative hypotheses in this case. 
\end{enumerate}
\end{remark}

\subsection{Moderate deviations from unit root} \label{Subsec_mildly}
As $c$ in \eqref{rho_T} varies between $-\infty$ and $0$, it is possible to use the theory from subsection \ref{Subsec_Local} for any AR(1) model with positive autocorrelation. However, the limits in Theorem \ref{DF-standard} and in Theorem \ref{DF-fixed} may not provide a valuable guideline when $\rho_T$ is not in fact in the very close neighbourhood of unity. For this situation,  \citeasnoun{PhillipsMagd2007MildlyIntegrated} generalise $\rho_T$ to moderate deviations from the unit root. We simplify the model slightly and consider   
\begin{equation} \label{rho_mild}
    \rho_T=1+c/T^\alpha \text{  for  }\alpha \in (0,1), \text{ and } c<0.
\end{equation}
 Moderate deviations from the unit root following \eqref{rho_mild} are also discussed in \citeasnoun{PM2007Chapterinbook}.  \citeasnoun{GiraitisPhillips2012Local} provide a generalisation of some results under a weaker condition, similar to $(1-\rho_T)T\rightarrow \infty$.
We strengthen Assumption~\ref{ass1} slightly, as 
 \begin{assumption}\label{ass2} Assume that
\begin{equation*}
\sum_{s=j}^{\infty}
 |\psi_s|<C j^{-1-a} \text{  for } j \geq 1
\end{equation*}
for $a>2$.
\end{assumption}

Under~\eqref{rho_mild}, $\overline{d}$ is a consistent estimate of $\mu$ only when $\alpha \in (0,1/2)$, but the CLT in \eqref{standardCLT} still holds for any $\alpha \in (0,1)$, see  Theorem 2.1 and the discussion on page 168 of \citeasnoun{GiraitisPhillips2012Local}. Recalling that, for any $T$, $\sigma^2=(1-\rho_T)^{-2} \omega^2$ and noticing that this is proportional to $T^{2 \alpha}$ in large samples, the rate of convergence of the CLT is  reduced to $\sqrt{T^{1-2\alpha}}$ (the theory does not cover the $\alpha=1$ case, but notice that $\sqrt{T^{1-2\alpha}} \rightarrow T^{-1/2}$  as $\alpha \rightarrow 1$ and this is the rate in \eqref{Lemma1bP}, suggesting a proximity of the two representations; the extension of \eqref{Lemma1aP} under \eqref{rho_mild} is explored more in \citeasnoun{PM2007Chapterinbook}). 

\begin{theorem} \label{DMforMildly}
For $d_t$ generated as in \eqref{dt}-\eqref{ar1} and~\eqref{rho_mild},  under Assumption~\ref{ass2}, , for any $\alpha \in (0,1)$:

\begin{itemize}

\item[Case A:] For $1/M+M/T^\alpha \rightarrow 0$ as $T \rightarrow \infty$:
\begin{equation}\label{MildLimit2A}
(-c)^{1/2} (M/T^\alpha)^{1/2} \frac{\overline{d}-\mu}
   {\widehat{\sigma}_A} \rightarrow_d N(0,1)
\end{equation}

\item[Case P:] For $1/m+m/T \rightarrow 0$ as $T \rightarrow \infty$, 
\begin{align}
   &\text{if } mT^{\alpha-1} \rightarrow 0 : \ \sqrt{T}\frac{\overline{d}-\mu}
   {\widehat{\sigma}_P} \rightarrow_d {N(0,1)} \label{MildLimit1}  \\
   &\text{if } m T^{\alpha-1} \rightarrow \infty : \  {(mT^{\alpha-1})^{-1/2}} 
   {1/2 (-c)^{1/2}} \left\{ \sqrt{T}\frac{\overline{d}-\mu}
   {\widehat{\sigma}_P} \right\} \rightarrow_d N(0,1) \label{MildLimit2P}  \ \   
   \end{align}
\end{itemize}
\end{theorem}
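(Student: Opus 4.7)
The plan is to combine the known central limit theorem for the sample mean of a mildly integrated AR(1) with a pointwise asymptotic analysis of $\widehat{\sigma}_A^2$ and $\widehat{\sigma}_P^2$ under each bandwidth regime, concluding by Slutsky's lemma. The CLT ingredient is already quoted in the paragraph preceding the theorem: by Theorem~2.1 of \citeasnoun{GiraitisPhillips2012Local}, $(-c)\,T^{1/2-\alpha}\,\overline{y}/\sqrt{\omega}\rightarrow_d N(0,1)$ for all $\alpha\in(0,1)$, so that $\sqrt{T}\,\overline{y}$ has stochastic order $\sqrt{\omega}\,T^{\alpha}/(-c)$, much larger than the $\sqrt{\omega}$ of the classical case.

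For Case~A I would write $\widehat{\sigma}_A^2 = \widehat{\gamma}_0 + 2\sum_{l=1}^M(1-l/M)\widehat{\gamma}_l$ and first replace each sample autocovariance by its population counterpart $\gamma_l = \rho_T^l \gamma_0$, with $\gamma_0 = \omega/(1-\rho_T^2) = \omega T^\alpha/(-2c)\,(1+o(1))$. The hypothesis $M/T^\alpha\to 0$ forces $\rho_T^M = \exp(cM/T^\alpha)(1+o(1))\to 1$, hence $\rho_T^l = 1+o(1)$ uniformly in $l\leq M$; the weighted sum then collapses to $\sum_{l=1}^M (1-l/M) = (M-1)/2$, giving $\widehat{\sigma}_A^2 = M\gamma_0(1+o_p(1)) = M\omega T^\alpha/(-2c)(1+o_p(1))$. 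Combining this with the CLT via Slutsky produces~\eqref{MildLimit2A}: the prefactor $(-c)^{1/2}(M/T^\alpha)^{1/2}$ is (up to a universal constant) precisely what cancels the magnitudes of numerator and denominator.

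For Case~P I would exploit the AR(1) filter identity $I_y(\lambda_j) = I_u(\lambda_j)/|1-\rho_T e^{i\lambda_j}|^2 + R_{T,j}$, where $R_{T,j}$ collects endpoint contributions from $y_0$ and $y_T$ and is uniformly negligible relative to the leading term because $y_t^2 = O_p(T^\alpha)$. Writing $q=-c/T^\alpha$ and $x_j=jT^{\alpha-1}$, the expansion $T^{2\alpha}|1-\rho_T e^{i\lambda_j}|^2 = c^2 + 4\pi^2 x_j^2 + o(1)$ makes the dichotomy transparent. When $mT^{\alpha-1}\to 0$ every $x_j\to 0$, the denominator reduces uniformly to $q^2$, so $\widehat{\sigma}_P^2 = q^{-2}(2\pi/m)\sum_{j=1}^m I_u(\lambda_j)(1+o_p(1))$; classical Daniell consistency for the weakly dependent $u_t$ delivers $\widehat{\sigma}_P^2\rightarrow_p \omega/(1-\rho_T)^2 = \sigma^2$, and \eqref{MildLimit1} follows from \eqref{standardCLT}. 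When $mT^{\alpha-1}\to\infty$, a Riemann-sum approximation combined with $\int_0^\infty (c^2+4\pi^2 x^2)^{-1}dx = 1/(4|c|)$ gives $(\omega/m)\sum_{j=1}^m |1-\rho_T e^{i\lambda_j}|^{-2} = \omega T^{1+\alpha}/(4m|c|)(1+o(1))$, so $\widehat{\sigma}_P^2 = \omega T^{1+\alpha}/(4m|c|)(1+o_p(1))$; Slutsky together with the mildly integrated CLT then yields~\eqref{MildLimit2P}.

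I anticipate the principal obstacle to be the second sub-case of Case~P. The weights $|1-\rho_T e^{i\lambda_j}|^{-2}$ are not uniformly bounded and concentrate at small $j$, so the standard Daniell consistency argument does not apply verbatim; one must establish a weighted law of large numbers for $\sum_j I_u(\lambda_j)\phi_{T,j}$, controlling second moments of the periodogram ordinates uniformly in $j$ and exploiting their asymptotic orthogonality at distinct Fourier frequencies. This is where the strengthened summability of Assumption~\ref{ass2} (with $a>2$) plays an essential role. A parallel but easier concern for Case~A is the uniform accuracy of $\widehat{\gamma}_l$ relative to the diverging $\gamma_0 = O(T^\alpha)$, which a Phillips--Solo decomposition of $u_t$ into a martingale difference plus a telescoping remainder handles routinely.
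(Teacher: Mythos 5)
Your architecture coincides with the paper's at the top level: both take the CLT for $\overline{d}$ from Theorem 2.1 of \citeasnoun{GiraitisPhillips2012Local} as the off-the-shelf ingredient and reduce the theorem to the in-probability limit of $\widehat{\sigma}_A^2$ or $\widehat{\sigma}_P^2$ under each bandwidth regime, concluding by Slutsky. For Case A the paper does not pass through the population autocovariances $\gamma_l=\rho_T^l\gamma_0$ at all: it uses the recursive-substitution decomposition \eqref{gamma_l} to compare $\widehat{\gamma}_l$ directly with $T^{-1}\sum_t(y_t-\overline{y})^2$ uniformly in $l\le M$, and then invokes Lemma~\ref{LimitVariance}. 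Your route needs, in addition, a uniform-in-$l$ bound on $\widehat{\gamma}_l-\gamma_l$ relative to the diverging $\gamma_0=O(T^\alpha)$; this is true but not ``routine'' in the classical sense (the fluctuation of $\widehat{\gamma}_0$ about $\gamma_0$ is of order $T^{(3\alpha-1)/2}$, not $T^{-1/2}$), so the paper's direct comparison of $\widehat{\gamma}_l$ with $\widehat{\gamma}_0$ is the more economical move. For the first sub-case of Case P, your filter-identity factorisation plus classical Daniell consistency for $u_t$ is a legitimate and arguably more transparent alternative to the paper's Lemmas~\ref{Periodogram_Mildly}--\ref{Daniell_Mildly} (which control $f(\lambda_j)^{-1}I(\lambda_j)-1$ Robinson-style and then sum by parts); both yield $f(0)^{-1}m^{-1}\sum_{j\le m}I(\lambda_j)\rightarrow_p 1$.

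The genuine gap is exactly where you flag it: the sub-case $mT^{\alpha-1}\rightarrow\infty$. Your plan requires a weighted law of large numbers for $\sum_{j\le m}I_u(\lambda_j)\left|1-\rho_T e^{i\lambda_j}\right|^{-2}$ with weights of size $T^{2\alpha}$ concentrated on the $O(T^{1-\alpha})$ lowest frequencies, together with uniform control of the endpoint remainder $R_{T,j}$ and of the bias $E I_u(\lambda_j)-2\pi g(\lambda_j)$ at those frequencies; you state the need but supply no argument, and this is the hardest step of the whole theorem. The paper sidesteps it entirely by Parseval: it writes $\frac{2\pi}{TT^{\alpha}}\sum_{j=1}^{m}I(\lambda_j)$ as $\frac{1}{2}T^{-(1+\alpha)}\sum_t(d_t-\overline{d})^2$ minus the tail $\frac{2\pi}{TT^{\alpha}}\sum_{j=m+1}^{T/2}I(\lambda_j)$, identifies the first term as $\omega^2/(-4c)+o_p(1)$ via the time-domain Lemma~\ref{LimitVariance}, and kills the tail with the crude pointwise bound $I(\lambda_j)=O_p(j^{-2}T^2)$ of Lemma~\ref{BoundPeriodogramMildly}, giving $O_p(T^{1-\alpha}m^{-1})=o_p(1)$ precisely under $mT^{\alpha-1}\rightarrow\infty$. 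Your Riemann-sum computation does produce the correct constant $\omega^2 T^{1+\alpha}/(4|c|m)$, so the route could in principle be completed, but as written the stochastic part is missing; adopting the Parseval-plus-tail-bound device converts the problematic frequency-domain average into a time-domain quantity whose limit is already known. One further caution on Case A: carry the normalisation through explicitly --- with $\widehat{\sigma}_A^2\asymp MT^{\alpha}\omega^2/(-2c)$ and $\mathrm{Var}(\overline{d})\asymp T^{2\alpha-1}\omega^2/c^2$, the object that is asymptotically standard normal must carry a $\sqrt{T}$ (it is a rescaling of $DM_A$, not of $\overline{d}/\widehat{\sigma}_A$), so ``up to a universal constant'' is not a sufficient hedge there.
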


\begin{remark}\label{REM_Mildly} 

\begin{enumerate}
\item Results in \eqref{MildLimit2A} and \eqref{MildLimit2P} are analogue; we conjecture that a result analogue to \eqref{MildLimit1}  also exists when $\widehat{\sigma}_A$ is used and $M/T^{\alpha} \rightarrow \infty$.

\item 
Rewriting 
 $ DM_P= 
  \sqrt{T}\frac{\overline{d}-\mu}
   {\widehat{\sigma}_P} 
   + \sqrt{T}\frac{\mu}
   {\widehat{\sigma}_P} $
for $m$ as in \eqref{MildLimit1}, the power depends on the drift 
$    \sqrt{T}\frac{\mu}
   {\widehat{\sigma}_P}
   =O(T^{1/2-\alpha})$. Therefore, the DM test still has power in cases of moderate deviations from the unit root, when $\alpha<1/2$. However, from this representation, it is immediate to see that, as the drift is of order $T^{1/2-\alpha}$, the power decreases as $\alpha \rightarrow 1/2$. This result means that it is progressively more difficult to detect forecast inaccuracy as the dependence increases, even well within the weak dependence region.
\item When
   $m T^{\alpha-1} \rightarrow \infty$ (or $M/T^\alpha \rightarrow 0$), the DM statistics diverges even under $H_0$, thus resulting again in spurious significance. 
\item Condition $m T^{\alpha-1} \rightarrow 0$ in \eqref{MildLimit1} is not binding when $\alpha=0$ but it is very strong as $\alpha \rightarrow 1$. Thus, results in \eqref{MildLimit1} and \eqref{MildLimit2P} are intermediate between the weakly dependent $|\rho_T|=|\rho|<1$ case and the unit root case. Taken together, they suggest that for large values of $m$ the DM test will give spurious significance in finite samples since $\rho$ is close to 1, and this problem is more relevant the closer $\rho$ is to unity, relative to the sample size, and the larger the bandwidth $m$. 
\end{enumerate}
\end{remark}

\section{Monte Carlo results}\label{sec:MC}

In this section, we investigate the properties of the DM statistic in the neighbourhood of unity in a Monte Carlo exercise. We consider the DGP
\begin{equation}\label{DGP}
y_t=\alpha+\beta x_{t-1}+u_{t}
\end{equation}
where
\begin{align*}
& x_t=\phi x_{t-1}+\varepsilon_{t}, \ |\phi|<1, \ \varepsilon_{t} \sim N.i.d.(0,\sigma^2_\varepsilon) \\
& u_{t} \sim N.i.d.(0,\sigma^2_u)
\end{align*}
and $u_{t}$ independent from $\varepsilon_s$ for all $s,t$. 

Notice that we have assumed $E(x_t)=0$, this is without loss of generality because in \eqref{DGP} we could use deviations from the mean, $y_t=(\alpha+\beta E(x_{t-1}))+\beta (x_{t-1}-E(x_{t-1}))+u_{t}$. Finally, we also set $\beta=1$, again without loss of generality. 

We consider two forecasting strategies:
\begin{align}
&y_{1,t}=\widehat{\beta}x_{t-1} \ \text{  where  } \widehat{\beta}=Plim_{R\rightarrow\infty}\frac{\sum_{s=t-R}^{t-1}y_s x_{s-1}}{\sum_{s=t-R}^{t-1} x_{s-1}^2}\label{for1} \\
& y_{2,t}=\widetilde{y} \ \text{  where  } \widetilde{y}=Plim_{R\rightarrow\infty}\frac{1}{R}\sum_{s=t-R}^{t-1}y_s\label{for2}
\end{align}
so $\widehat{\beta}=1$ (in general it would be $\beta$ ) and $\widetilde{y}=\alpha$, and
\begin{align*}
&e_{1,t}=y_t-y_{1,t}= \alpha+u_t \\
& e_{2,t}=y_t-y_{2,t}=
x_{t-1}+u_t  \\
& d_t=(e_{1,t}^2- e_{2,t}^2)=
\alpha^2+2 \alpha u_t - x_{t-1}^2 - 2 x_{t-1} u_t
\end{align*}
and
\begin{equation*}
E(d_t)=(e_{1,t}^2-e_{2,t}^2)=
\alpha^2- \frac{1}{1-\phi^2} \sigma_\varepsilon^2
\end{equation*}
so, for
\begin{equation}\label{alpha}
\alpha=\sqrt{\frac{1}{1-\phi^2} \sigma_\varepsilon^2}+\delta
\end{equation}
then $E(d_t)=0$  when $\delta=0$. Notice that as $|\phi|<1$ then both $x_t$, $y_t$ and $d_t$  are mixing with sufficient rate, $E(d_s^2)<\infty$ (in view of the Gaussianity) and the long run variance exists. \bigskip \\
\textbf{Remark}
\textit{From Lemma 1 of \citeasnoun{dittmann2002properties}, $x_{t-1}^2$ is AR(1) with coefficient $\phi^2$; $\alpha u_t$ is an independent process and $x_{t-1} u_t$ is Martingale difference. Thus $d_t$ is like AR(1) plus noise.} 

As in any realistic situation the sample size $T$ is given, whether the standard normal limit or Theorem \ref{DF-standard} is a better approximation depends on the relative interplay between $\phi$ and $T$ (by the same token, the Lemma in \citeasnoun{dittmann2002properties} should also be seen  as an approximation, when $\phi$ is close to 1, relative to $T$). For values of $\phi$ close to 1 (relative to $T$) the limit \eqref{Lemma1bP} would be a better approximation for the sample average, and Theorem \ref{DF-standard} is a more appropriate guideline; conversely, \eqref{standardCLT} and the standard normal limit for the DM statistic should be a more reliable guideline when $\phi$ is not close to 1, relative to $T$. Thus, the same value of $\phi$ could generate spurious rejections or not depending on the sample size. In this Monte Carlo study, we  thus consider a range of values for $\phi$ and $T$ to assess the interplay between these two key elements.

We consider two sample sizes,  $T=50$ and $T=100$, and  a range of bandwidths spanning $m=1$ to $m=\left\lfloor T^{2/3} \right\rfloor$ when the average periodogram is used, and a range of bandwidths spanning from $M=\left\lfloor T^{1/4} \right\rfloor$ to $M=T$ when the weighted autocovariance estimator is used. For each experiment, we run $10,000$ repetitions, and we compute the empirical frequencies of rejections of the two-sided version of the test, i.e. we compare the $|DM|$  statistic against the appropriate $5 \%$ critical value from the $t_{2m}$ or the $\Phi_A(b)$ distribution, respectively, the latter as in \citeasnoun{kiefer2005new}. We always use these critical values since they yield better size properties, as discussed, for example, in \citeasnoun{lazarus2018har} or in \citeasnoun{coroneo2020comparing}.

We consider $\sigma_{\varepsilon}^2 = 1$, $\sigma_u^2 = 1$, and  a range of values for $\phi$; $\alpha$ is as in \eqref{alpha} for two values of $\delta$. We set $\delta=0$ to observe the effects on the empirical size; to observe the effects on power we set $\delta=
-\sqrt{\sigma^2_\varepsilon/(1-\phi^2)}+\sqrt{\sigma^2_\varepsilon/(1-\phi^2)-1)}$ as this yields $E(d_t)=-1$: with this choice we can observe how the power changes as the persistence increases, for the same deviation from the null hypothesis.

\begin{table}[t!]
  \centering
  \caption{Empirical null rejection frequencies - weighted autocovariances}   \label{tab:MCA}%
    \begin{tabular}{cccccccccc}
    \\
    \multicolumn{10}{c}{$
    E(d_t)=0$, $T=50$}\\
    \hline \hline
     \multicolumn{1}{c|}{\backslashbox{$M$}{$\phi$}} & 0     & 0.5   & 0.75  & 0.8   & 0.85  & 0.9   & 0.95  & 0.975 & 0.99  \\
          \hline
    $\left\lfloor T^{1/4} \right\rfloor $ & 0.055    & 0.069    & 0.134    & 0.176    & 0.237    & 0.353    & 0.556    & 0.706    & 0.827 \\
    $\left\lfloor T^{2/9} \right\rfloor $   & 0.055    & 0.069    & 0.134    & 0.176    & 0.237    & 0.353    & 0.556    & 0.706    & 0.827 \\
    $\left\lfloor T^{1/3} \right\rfloor $   & 0.054    & 0.062    & 0.118    & 0.152    & 0.204    & 0.307    & 0.500    & 0.658    & 0.798 \\
    $\left\lfloor T^{1/2} \right\rfloor $   & 0.052    & 0.056    & 0.092    & 0.118    & 0.155    & 0.229    & 0.380    & 0.539    & 0.710 \\
    {T}        & 0.051    & 0.056    & 0.083    & 0.100    & 0.128    & 0.178    & 0.284    & 0.401    & 0.545 \\
          \hline
     \\
    \multicolumn{10}{c}{
   $E(d_t)=-1$, $T=50$}\\
    \hline \hline
     \multicolumn{1}{c|}{\backslashbox{$M$}{$\phi$}} & 0     & 0.5   & 0.75  & 0.8   & 0.85  & 0.9   & 0.95  & 0.975 & 0.99   \\
          \hline
    $\left\lfloor T^{1/4} \right\rfloor $ & 0.885 & 0.583 & 0.244 & 0.208 & 0.217 & 0.309 & 0.530 & 0.696 & 0.826 \\
    $\left\lfloor T^{2/9} \right\rfloor $  & 0.885 & 0.583 & 0.244 & 0.208 & 0.217 & 0.309 & 0.530 & 0.696 & 0.826 \\
    $\left\lfloor T^{1/3} \right\rfloor $ & 0.873 & 0.554 & 0.197 & 0.157 & 0.169 & 0.255 & 0.471 & 0.645 & 0.794 \\
    $\left\lfloor T^{1/2} \right\rfloor $ & 0.817 & 0.482 & 0.120 & 0.085 & 0.093 & 0.157 & 0.343 & 0.525 & 0.706 \\
    $T$     & 0.654 & 0.379 & 0.099 & 0.070 & 0.072 & 0.117 & 0.247 & 0.386 & 0.541 \\

            \hline
     \\
    \multicolumn{10}{c}{$
    E(d_t)=0$, $T=100$}\\
    \hline \hline
     \multicolumn{1}{c|}{\backslashbox{$M$}{$\phi$}} & 0     & 0.5   & 0.75  & 0.8   & 0.85  & 0.9   & 0.95  & 0.975 & 0.99   \\
          \hline
  $\left\lfloor T^{1/4} \right\rfloor $ & 0.047    & 0.056    & 0.108    & 0.140    & 0.194    & 0.294    & 0.489    & 0.643    & 0.788 \\
    $\left\lfloor T^{2/9} \right\rfloor $   & 0.048    & 0.060    & 0.125    & 0.165    & 0.232    & 0.350    & 0.547    & 0.694    & 0.823 \\
    $\left\lfloor T^{1/3} \right\rfloor $   & 0.047    & 0.053    & 0.098    & 0.126    & 0.171    & 0.259    & 0.447    & 0.603    & 0.760 \\
   $\left\lfloor T^{1/2} \right\rfloor $  & 0.044    & 0.047    & 0.078    & 0.095    & 0.126    & 0.180    & 0.309    & 0.458    & 0.645 \\
    $T$        & 0.046    & 0.047    & 0.066    & 0.079    & 0.100    & 0.136    & 0.214    & 0.310    & 0.441 \\
            \hline
     \\
    \multicolumn{10}{c}{
    $E(d_t)=-1$, $T=100$}\\
    \hline \hline
     \multicolumn{1}{c|}{\backslashbox{$M$}{$\phi$}} & 0     & 0.5   & 0.75  & 0.8   & 0.85  & 0.9   & 0.95  & 0.975 & 0.99   \\
          \hline

    $\left\lfloor T^{1/4} \right\rfloor $ & 0.997 & 0.880 & 0.376 & 0.275 & 0.224 & 0.263 & 0.460 & 0.634 & 0.787 \\
    $\left\lfloor T^{2/9} \right\rfloor $ & 0.997 & 0.892 & 0.430 & 0.328 & 0.280 & 0.322 & 0.523 & 0.680 & 0.820 \\
    $\left\lfloor T^{1/3} \right\rfloor $ & 0.995 & 0.871 & 0.336 & 0.233 & 0.183 & 0.221 & 0.412 & 0.591 & 0.755 \\
    $\left\lfloor T^{1/2} \right\rfloor $ & 0.990 & 0.826 & 0.237 & 0.135 & 0.087 & 0.117 & 0.266 & 0.441 & 0.641 \\
    $T$     & 0.884 & 0.629 & 0.180 & 0.104 & 0.067 & 0.084 & 0.176 & 0.293 & 0.439 \\
\hline
\\
\multicolumn{10}{p{12cm}}{Note: empirical null rejection frequencies for the DM test with the $DM_A$ statistic and 
fixed-$b$ critical values. The data generating process is in equations~\eqref{DGP} and \eqref{alpha}, with $E(d_t)=0$ for the size exercise, and $E(d_t)=-1$ for the power study. The sample size is 50 and 100; 10,000 replications.}
    \end{tabular}
\end{table}%

\begin{table}[t!]
  \centering
  \caption{Empirical null rejection frequencies - weighted periodogram}   \label{tab:MCP}%
    \begin{tabular}{cccccccccc}
    \\
    \multicolumn{10}{c}{$E(d_t)=0$, $T=50$}\\
    \hline \hline
     \multicolumn{1}{c|}{\backslashbox{$m$}{$\phi$}} & 0     & 0.5   & 0.75  & 0.8   & 0.85  & 0.9   & 0.95  & 0.975 & 0.99  \\
          \hline
    {1}      & 0.050    & 0.048    & 0.058    & 0.068    & 0.082    & 0.112    & 0.184    & 0.279    & 0.420 \\
    $\left\lfloor T^{1/4} \right\rfloor $  & 0.052    & 0.054    & 0.075    & 0.088    & 0.114    & 0.160    & 0.270    & 0.407    & 0.594 \\
   $\left\lfloor T^{1/3} \right\rfloor $   & 0.054    & 0.055    & 0.077    & 0.098    & 0.130    & 0.191    & 0.324    & 0.484    & 0.669 \\
   $\left\lfloor T^{1/2} \right\rfloor $   & 0.053    & 0.060    & 0.103    & 0.133    & 0.181    & 0.274    & 0.463    & 0.628    & 0.779 \\
   $\left\lfloor T^{2/3} \right\rfloor $   & 0.056    & 0.067    & 0.133    & 0.173    & 0.234    & 0.351    & 0.558    & 0.709    & 0.830 \\
      \hline
     \\
    \multicolumn{10}{c}{$E(d_t)=-1$, $T=50$}\\
    \hline \hline
     \multicolumn{1}{c|}{\backslashbox{$m$}{$\phi$}} & 0     & 0.5   & 0.75  & 0.8   & 0.85  & 0.9   & 0.95  & 0.975 & 0.99   \\
          \hline
    {1}  & 0.356 & 0.201 & 0.059 & 0.047 & 0.048 & 0.073 & 0.157 & 0.264 & 0.413 \\
    $\left\lfloor T^{1/4} \right\rfloor $ & 0.610 & 0.321 & 0.077 & 0.052 & 0.056 & 0.100 & 0.233 & 0.392 & 0.590 \\
    $\left\lfloor T^{1/3} \right\rfloor $ & 0.728 & 0.401 & 0.085 & 0.057 & 0.066 & 0.122 & 0.286 & 0.471 & 0.666 \\
    $\left\lfloor T^{1/2} \right\rfloor $ & 0.847 & 0.506 & 0.153 & 0.123 & 0.132 & 0.215 & 0.432 & 0.616 & 0.776 \\
    $\left\lfloor T^{2/3} \right\rfloor $ & 0.878 & 0.571 & 0.238 & 0.204 & 0.215 & 0.310 & 0.532 & 0.698 & 0.827 \\

           \hline
     \\
    \multicolumn{10}{c}{$E(d_t)=0$, $T=100$}\\
    \hline \hline
     \multicolumn{1}{c|}{\backslashbox{$m$}{$\phi$}} & 0     & 0.5   & 0.75  & 0.8   & 0.85  & 0.9   & 0.95  & 0.975 & 0.99    \\
          \hline
    {1}      & 0.049    & 0.048    & 0.053    & 0.059    & 0.071    & 0.092    & 0.135    & 0.203    & 0.320 \\
     $\left\lfloor T^{1/4} \right\rfloor $    & 0.045    & 0.045    & 0.064    & 0.078    & 0.099    & 0.140    & 0.226    & 0.350    & 0.544 \\
    $\left\lfloor T^{1/3} \right\rfloor $   & 0.046    & 0.049    & 0.067    & 0.086    & 0.108    & 0.151    & 0.255    & 0.399    & 0.597 \\
   $\left\lfloor T^{1/2} \right\rfloor $    & 0.048    & 0.051    & 0.084    & 0.105    & 0.146    & 0.215    & 0.397    & 0.560    & 0.730 \\
   $\left\lfloor T^{2/3} \right\rfloor $    & 0.048    & 0.056    & 0.111    & 0.146    & 0.209    & 0.320    & 0.519    & 0.671    & 0.810 \\
       \hline
     \\
    \multicolumn{10}{c}{$E(d_t)=-1$, $T=100$}\\
    \hline \hline
     \multicolumn{1}{c|}{\backslashbox{$m$}{$\phi$}} & 0     & 0.5   & 0.75  & 0.8   & 0.85  & 0.9   & 0.95  & 0.975 & 0.99   \\
          \hline
     {1}   & 0.582 & 0.349 & 0.109 & 0.075 & 0.053 & 0.058 & 0.107 & 0.188 & 0.315 \\
    $\left\lfloor T^{1/4} \right\rfloor $ & 0.959 & 0.698 & 0.161 & 0.085 & 0.052 & 0.073 & 0.182 & 0.332 & 0.539 \\
    $\left\lfloor T^{1/3} \right\rfloor $ & 0.977 & 0.758 & 0.181 & 0.097 & 0.057 & 0.085 & 0.213 & 0.382 & 0.593 \\
    $\left\lfloor T^{1/2} \right\rfloor $ & 0.993 & 0.848 & 0.274 & 0.178 & 0.134 & 0.175 & 0.359 & 0.547 & 0.727 \\
    $\left\lfloor T^{2/3} \right\rfloor $ & 0.996 & 0.878 & 0.387 & 0.293 & 0.245 & 0.288 & 0.490 & 0.660 & 0.807 \\
\hline
\\
\multicolumn{10}{p{12cm}}{Note: empirical null rejection frequencies for the DM test with the $DM_P$ statistic and  fixed-$m$ critical values. The data generating process is in equations\eqref{DGP} and \eqref{alpha}, with $E(d_t)=0$ for the size exercise, and $E(d_t) =-1$ for the power study. The sample size is 50 and 100; 10,000 replications.}
    \end{tabular}
\end{table}%

We report in Tables~\ref{tab:MCA}-\ref{tab:MCP}
the simulation results using the weighted autocovariance and the weighted periodogram estimates of the long-run variance, respectively.
Results confirm the findings in Section~\ref{sec_theory}. In particular:
\begin{enumerate}
    \item The empirical size is correct for $\phi=0$, but (for given $T$) it deteriorates as we move closer to $\phi=1$ and as $M$ is smaller or $m$ is larger.  
    \item The empirical power drops as we move closer to $\phi=1$ and as $M$ is smaller or $m$ is larger, in the sense that the presence of $E(d_t) \neq 0$ does not affect much the number of rejections of the null hypothesis in those cases. 
\end{enumerate}

These results  support the key conclusions that we derived in Section \ref{sec_theory}. In particular, in the size exercise, the distortion increases with $\phi$ and with the bandwidth $m$ (and decreases with $M$). In the power exercise, the power drops as $\phi$ increases from 0 to 0.85, but notice that for large values of $\phi$ this power is in fact fictitious, in the sense that it rather reflects the spurious size distortion that we observed under the null. \label{R2.5} \Copy{R2.5t}{Automatic procedures to select the bandwidth, as in \citeasnoun{delgado1996optimal}, \citeasnoun{robinson1983review} or in \citeasnoun{newey1994automatic} would not solve these problems, although the fact that smaller $m$s (larger $M$s) are automatically selected as the autocorrelation in the loss function increases, would at least avoid the most adverse effects. }
\clearpage

\section{Empirical application}\label{sec:emp}
To illustrate the problems associated with the DM test when there is dependence in the loss differential, in this section we present the case in which a forecast for inflation  with  weakly dependent forecast errors is compared to two strongly dependent naive benchmarks. In particular, we consider quarterly predictions for the inflation rate in the Euro Area from a standard AR(1) model, as in  \citeasnoun{forni2003financial} and \citeasnoun{marcellino2003macroeconomic}. As for the benchmarks, we consider a constant 2\% prediction (that represents the inflation target in the Euro Area) and a rolling  average (RA) prediction. 

We use data on the Harmonized Index of Consumer Prices from the FRED database, and we compute quarterly year-on-year inflation rates from 1996.Q1 to 2020.Q4. \label{R1.4} \Copy{R1.4t}{Given that our objective is to compare forecasting methods as opposed to forecasting models, we consider the case of non-vanishing estimation uncertainty, as in \citeasnoun{giacomini2006tests}, and estimate all coefficients and rolling averages using a rolling window of 10 years.} We compute predictions for horizons from 1 quarter to 8 quarters-ahead, and we evaluate them on the period from 2010.Q1 to 2020.Q4 (44 observations) using a quadratic loss function.

The series of inflation and the forecasts for selected forecast horizons for the AR(1) model are shown in Figure~\ref{fig:Infl}, along with the $2\%$ and the rolling average benchmarks. A visual inspection of the plots immediately suggests that the forecast from the AR(1) model is  clearly superior to the $2\%$ benchmark for the 2-quarter horizon, but, not surprisingly, the superior performance of the forecast from the AR(1) model is eroded as longer forecasting horizons are considered. Additional plots for the realised forecast errors, the realised losses and the realised loss differentials are reported in Appendix~\ref{app_plots}.

\begin{figure}[H]
\caption{Realised inflation and forecasts \bigskip} 
\includegraphics[trim={1.5cm 8cm 1cm 8cm},clip,scale=0.9]{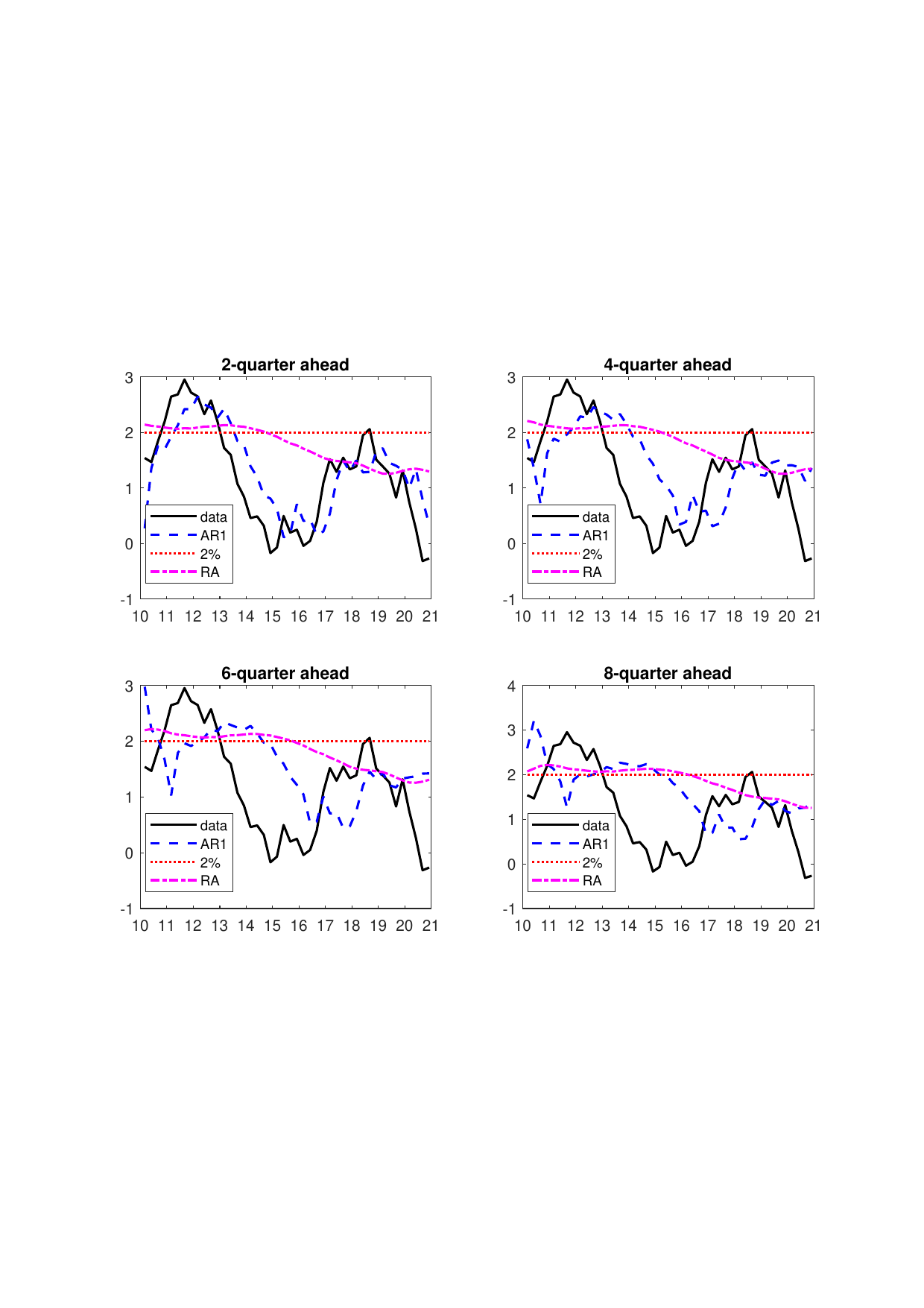}
\label{fig:Infl}
Note: realised inflation (data), AR(1)  forecasts, along with the $2\%$  ($2\%$) and the rolling average (RA) benchmark forecasts for forecasting horizons 2, 4, 6 and 8 quarters.
\end{figure}

\begin{table}[t!]
  \centering
  \caption{Summary statistics of forecast errors}
    \begin{tabular}{crrrrrrrl}
    \\
    \multicolumn{8}{c}{AR(1) forecast}\\
    \midrule
    Horizon & \multicolumn{1}{l}{Mean} & \multicolumn{1}{l}{Median} & \multicolumn{1}{l}{Std} & \multicolumn{1}{l}{AC1} & \multicolumn{1}{l}{AC2} & \multicolumn{1}{l}{AC3} & \multicolumn{1}{l}{AC4}& \multicolumn{1}{l}{ADF}\\
    \midrule
    1        & -0.059   & -0.077   & 0.367    & 0.284    & 0.264    & 0.182    & 0.179    & -4.893$^{**}$ \\
    2        & -0.100   & -0.092   & 0.590    & 0.690    & 0.384    & 0.350    & 0.313    & -1.823 \\
    3        & -0.148   & -0.135   & 0.724    & 0.817    & 0.627    & 0.433    & 0.397    & -1.892 \\
    4        & -0.204   & -0.134   & 0.827    & 0.860    & 0.697    & 0.569    & 0.424    & -1.587 \\
    5        & -0.252   & -0.314   & 0.904    & 0.882    & 0.733    & 0.603    & 0.456    & -2.222 \\
    6        & -0.317   & -0.182   & 0.983    & 0.898    & 0.760    & 0.577    & 0.390    & -2.417 \\
    7        & -0.378   & -0.288   & 1.046    & 0.907    & 0.738    & 0.551    & 0.354    & -2.009 \\
    8        & -0.419   & -0.420   & 1.078    & 0.881    & 0.717    & 0.535    & 0.366    & -2.014 \\

   \midrule
    \\
    \multicolumn{8}{c}{RA forecast}\\
    \midrule
    Horizon & \multicolumn{1}{l}{Mean} & \multicolumn{1}{l}{Median} & \multicolumn{1}{l}{Std} & \multicolumn{1}{l}{AC1} & \multicolumn{1}{l}{AC2}& \multicolumn{1}{l}{AC3} & \multicolumn{1}{l}{AC4}& \multicolumn{1}{l}{ADF}\\
    \midrule
    1        & -0.505   & -0.411   & 0.855    & 0.912    & 0.787    & 0.639    & 0.491    & -1.116 \\
    2        & -0.525   & -0.428   & 0.871    & 0.914    & 0.789    & 0.641    & 0.493    & -1.085 \\
    3        & -0.546   & -0.413   & 0.882    & 0.916    & 0.792    & 0.644    & 0.497    & -1.074 \\
    4        & -0.566   & -0.409   & 0.890    & 0.917    & 0.794    & 0.648    & 0.503    & -1.065 \\
    5        & -0.585   & -0.427   & 0.895    & 0.917    & 0.795    & 0.653    & 0.509    & -1.083 \\
    6        & -0.605   & -0.438   & 0.896    & 0.919    & 0.799    & 0.658    & 0.517    & -1.091 \\
    7        & -0.624   & -0.454   & 0.895    & 0.919    & 0.802    & 0.665    & 0.526    & -1.108 \\
    8        & -0.642   & -0.475   & 0.893    & 0.920    & 0.806    & 0.673    & 0.536    & -1.123 \\

    \midrule
    \\
    \multicolumn{8}{c}{2\% forecast}\\
    \midrule
    Horizon & \multicolumn{1}{l}{Mean} & \multicolumn{1}{l}{Median} & \multicolumn{1}{l}{Std} & \multicolumn{1}{l}{AC1} & \multicolumn{1}{l}{AC2}& \multicolumn{1}{l}{AC3} & \multicolumn{1}{l}{AC4}& \multicolumn{1}{l}{ADF}\\
    \midrule
    1-8       & -0.765   & -0.673   & 0.923    & 0.926    & 0.821    & 0.695    & 0.569    & -0.732 \\
    \bottomrule
    \multicolumn{9}{p{12cm}}{Note:  summary statistics for forecast errors from AR(1) predictions (top panel), rolling  average (RA) predictions (middle panel) and constant 2\% predictions. Forecast horizons are in quarters and forecast errors are defined as the realised value minus the prediction. ADF refers to the augmented Dickey–Fuller test (with intercept and lag order selected using the BIC criterion). $^*$ and $^{**}$ denote significance at 10\% and 5\% level.}
    \end{tabular}%
  \label{tab:summarye}%
\end{table}%

\begin{table}[t!]
  \centering
  \caption{Summary statistics of realised losses}
    \begin{tabular}{crrrrrrrl}
    \\
    \multicolumn{8}{c}{AR(1) forecast}\\
    \midrule
    Horizon & \multicolumn{1}{l}{Mean} & \multicolumn{1}{l}{Median} & \multicolumn{1}{l}{Std} & \multicolumn{1}{l}{AC1} & \multicolumn{1}{l}{AC2} & \multicolumn{1}{l}{AC3} & \multicolumn{1}{l}{AC4}& \multicolumn{1}{l}{ADF}\\
    \midrule
     1     & 0.135 & 0.092 & 0.128 & -0.092 & 0.121 & -0.030 & -0.250 & -6.938$^{**}$ \\
    2     & 0.350 & 0.230 & 0.394 & 0.301 & -0.145 & -0.073 & 0.023 & -4.735$^{**}$ \\
    3     & 0.533 & 0.317 & 0.596 & 0.462 & 0.181 & 0.028 & 0.004 & -3.820$^{**}$ \\
    4     & 0.710 & 0.404 & 0.764 & 0.693 & 0.416 & 0.213 & -0.016 & -2.466 \\
    5     & 0.862 & 0.486 & 0.927 & 0.738 & 0.470 & 0.292 & 0.112 & -2.429 \\
    6     & 1.045 & 0.600 & 1.133 & 0.706 & 0.459 & 0.303 & 0.182 & -2.208 \\
    7     & 1.213 & 0.732 & 1.240 & 0.776 & 0.472 & 0.308 & 0.200 & -2.094 \\
    8     & 1.312 & 0.700 & 1.323 & 0.728 & 0.486 & 0.329 & 0.247 & -2.163 \\   
   \midrule
    \\
    \multicolumn{8}{c}{RA forecast}\\
    \midrule
    Horizon & \multicolumn{1}{l}{Mean} & \multicolumn{1}{l}{Median} & \multicolumn{1}{l}{Std} & \multicolumn{1}{l}{AC1} & \multicolumn{1}{l}{AC2}& \multicolumn{1}{l}{AC3} & \multicolumn{1}{l}{AC4}& \multicolumn{1}{l}{ADF}\\
    \midrule
    1     & 0.970 & 0.363 & 1.172 & 0.849 & 0.663 & 0.519 & 0.400 & -1.688 \\
    2     & 1.017 & 0.357 & 1.241 & 0.858 & 0.672 & 0.527 & 0.410 & -1.610 \\
    3     & 1.058 & 0.353 & 1.304 & 0.863 & 0.682 & 0.538 & 0.420 & -1.579 \\
    4     & 1.094 & 0.341 & 1.357 & 0.869 & 0.693 & 0.553 & 0.433 & -1.541 \\
    5     & 1.125 & 0.349 & 1.401 & 0.874 & 0.705 & 0.567 & 0.449 & -1.526 \\
    6     & 1.150 & 0.336 & 1.443 & 0.877 & 0.713 & 0.579 & 0.459 & -1.546 \\
    7     & 1.171 & 0.329 & 1.478 & 0.881 & 0.721 & 0.587 & 0.469 & -1.555 \\
    8     & 1.191 & 0.332 & 1.510 & 0.883 & 0.727 & 0.592 & 0.475 & -1.564 \\
    \midrule
    \\
    \multicolumn{8}{c}{2\% forecast}\\
    \midrule
    Horizon & \multicolumn{1}{l}{Mean} & \multicolumn{1}{l}{Median} & \multicolumn{1}{l}{Std} & \multicolumn{1}{l}{AC1} & \multicolumn{1}{l}{AC2}& \multicolumn{1}{l}{AC3} & \multicolumn{1}{l}{AC4}& \multicolumn{1}{l}{ADF}\\
    \midrule
 1-8    & 1.418 & 0.509 & 1.593 & 0.862 & 0.654 & 0.461 & 0.338 & -1.581 \\
    \bottomrule
    \multicolumn{9}{p{12cm}}{Note:  summary statistics for realised losses from AR(1) predictions (top panel), rolling  average (RA) predictions (middle panel) and constant 2\% predictions. Forecast horizons are in quarters and forecast errors are defined as the realised value minus the prediction. ADF refers to the augmented Dickey–Fuller test (with intercept and lag order selected using the BIC criterion). $^*$ and $^{**}$ denote significance at 10\% and 5\% level.}
    \end{tabular}%
  \label{tab:summaryL}%
\end{table}%

In Table~\ref{tab:summarye}, we report summary statistics for the forecast errors (defined as the realised value minus the prediction) for the AR(1) and the two benchmark predictions. The forecast errors are all negative on average, implying that in this period inflation in the Euro Area has been lower than predicted by the AR(1) and the benchmarks. This result is not generated by a few large negative errors, as all the median forecast errors are also negative.

The average and median forecast errors for the AR(1) increase (in absolute value) with the forecast horizon,  but they remain lower than the ones of the two benchmarks for all the forecasting horizons. The standard deviations of the AR(1) forecast errors also increase with the forecasting horizon, and they are  smaller than the ones of the two benchmarks for forecasting horizons up to 4 quarters. 
Finally, we also present the autocorrelation structure and the ADF tests for the forecast errors (we estimated the model with the intercept, with lags selected by means of the BIC). Especially at the lowest horizons, the autocorrelations of the errors from the AR(1) forecasts declines fairly quickly, in comparison with the autocorrelations of the benchmarks. Despite this fact, the ADF test fails to reject the null hypothesis in all the cases, except for the one period horizon.  We interpret this as a situation of low power of the ADF test, and therefore as evidence of persistence, but possibly not a unit root. We verify this interpretation by 
analysing the properties of the realised forecast losses reported in Table~\ref{tab:summaryL}. The average realised losses associated to the AR(1) forecast are lower, at least for forecasts up to six quarters, and less dispersed than the ones of the two benchmarks, so they are, in this sense, more precise. Moreover, the losses from the AR(1) predictions are not very correlated for short forecasting horizons. As we increase the forecasting horizon the dependence increases, but the autocorrelations still decay reasonably quickly.  On the other hand, the two benchmarks display large and persistent  autocorrelations in their realised forecast losses at all forecasting horizons. We further investigate the dependence in the realised losses using the ADF test: the difference in the persistence that we observed in the sample autocorrelations of the realised losses is confirmed by the outcome of the ADF test, where the unit root hypothesis is rejected only for the forecasts from the AR(1) model (and only for short horizons).

Overall, these results suggest that the AR(1) model should be more precise for short-term forecasts, but this superiority could be masked empirically by the excessive dependence in the benchmarks. On the other hand, the AR(1) does not seem to produce more precise forecasts than the benchmarks at longer horizons, and the outcomes of the unit roots tests should be interpreted as a warning that any potential statistical significant difference may be spurious.

\begin{table}[btp]
  \centering
  \caption{Summary statistics loss differential}
    \begin{tabular}{crrrrrrrl}
    \\
    \multicolumn{9}{c}{Benchmark: RA}\\
    \midrule
    \multicolumn{1}{l}{Horizon} & \multicolumn{1}{l}{Mean} & \multicolumn{1}{l}{Median} & \multicolumn{1}{l}{Std} & \multicolumn{1}{l}{AC1} & \multicolumn{1}{l}{AC2} &
    \multicolumn{1}{l}{AC3} & \multicolumn{1}{l}{AC4} &\multicolumn{1}{l}{ADF} \\
    \midrule
    1        & 0.835    & 0.261    & 1.152    & 0.852    & 0.652    & 0.499    & 0.385    & -1.624 \\
    2        & 0.667    & 0.067    & 1.181    & 0.826    & 0.647    & 0.504    & 0.382    & -1.855 \\
    3        & 0.525    & -0.001   & 1.127    & 0.848    & 0.675    & 0.501    & 0.348    & -2.172 \\
    4        & 0.384    & -0.003   & 1.093    & 0.837    & 0.675    & 0.479    & 0.283    & -2.581 \\
    5        & 0.263    & -0.020   & 1.046    & 0.843    & 0.644    & 0.440    & 0.233    & -3.047$^{**}$\\
    6        & 0.105    & -0.068   & 1.015    & 0.770    & 0.583    & 0.400    & 0.283    & -2.128 \\
    7        & -0.041   & -0.074   & 0.954    & 0.797    & 0.511    & 0.370    & 0.299    & -1.999 \\
    8        & -0.121   & -0.076   & 0.865    & 0.651    & 0.386    & 0.230    & 0.195    & -3.678$^{**}$ \\
    \midrule
    \\
    \multicolumn{9}{c}{Benchmark: 2\%}\\
    \midrule
    \multicolumn{1}{l}{Horizon} & \multicolumn{1}{l}{Mean} & \multicolumn{1}{l}{Median} & \multicolumn{1}{l}{Std} & \multicolumn{1}{l}{AC1} & \multicolumn{1}{l}{AC2}&
    \multicolumn{1}{l}{AC3} & \multicolumn{1}{l}{AC4} &\multicolumn{1}{l}{ADF} \\
    \midrule
    1        & 1.283    & 0.449    & 1.573    & 0.862    & 0.635    & 0.437    & 0.320    & -1.664 \\
    2        & 1.068    & 0.433    & 1.530    & 0.834    & 0.626    & 0.443    & 0.305    & -1.353 \\
    3        & 0.885    & 0.335    & 1.389    & 0.851    & 0.643    & 0.440    & 0.269    & -1.775 \\
    4        & 0.708    & 0.305    & 1.306    & 0.834    & 0.640    & 0.410    & 0.225    & -2.601 \\
    5        & 0.556    & 0.121    & 1.255    & 0.846    & 0.634    & 0.414    & 0.191    & -2.556 \\
    6        & 0.373    & 0.039    & 1.208    & 0.794    & 0.602    & 0.416    & 0.265    & -2.290 \\
    7        & 0.206    & 0.026    & 1.192    & 0.819    & 0.561    & 0.395    & 0.275    & -2.517 \\
    8        & 0.106    & 0.009    & 1.161    & 0.736    & 0.474    & 0.285    & 0.185    & -2.626$^{*}$ \\
    \bottomrule
    \multicolumn{9}{p{12cm}}{Note:  summary statistics for the AR(1) loss differential with respect to rolling average (RA) predictions (top panel) and constant 2\% predictions (bottom panel). The loss function is quadratic and the loss differential is computed as the loss of the benchmark minus the loss of the AR(1). ADF refers to the augmented Dickey–Fuller test (with intercept and lag order selected using the BIC). $^*$ and $^{**}$ denote significance at 10\% and 5\% level.}
    \end{tabular}%
  \label{tab:summary_loss}%
\end{table}%

Summary statistics of the loss differentials, computed
as the loss of the benchmark minus the loss of the AR(1), reported in Table~\ref{tab:summary_loss},  show that at short  horizons loss differentials are positive, indicating that AR(1) predictions  may be more accurate than the benchmarks. As the forecasting horizon increases, the average loss differential decreases. In particular, for the RA benchmark, it becomes negative, so that at 7 and 8 quarters ahead RA predictions  might be more accurate than AR(1) predictions. 

However, the table also shows that the loss differentials are characterised by relevant autocorrelations, even at short forecasting horizons: the properties of the loss differentials are therefore heavily affected by the benchmark considered, and even with a forecast with weakly dependent loss it is possible to have strong autocorrelation of the loss differential.  In the last column of the table, we report the augmented Dickey–Fuller test statistic, which clearly indicates that even at short forecasting horizons the null of unit root of the loss differential cannot be rejected. With these levels of dependence, the DM test statistic is going to be subject to the drawbacks described in Section~\ref{sec_theory}.
\begin{table}[t!]
  \centering
  \caption{Forecast evaluation - weighted autocovariances}
    \begin{tabular}{c|llllllll}
    \multicolumn{9}{c}{}\\
    \multicolumn{9}{c}{Benchmark: RA}\\
    \midrule
   \multicolumn{1}{c|}{\backslashbox{$M$}{Horizon}} & 1      & 2      & 3      & 4      & 5      & 6      & 7      & 8 \\
    \midrule
    $\left\lfloor T^{2/9} \right\rfloor$        & 3.594$^{**}$     & 2.837$^{**}$     & 2.304$^{**}$     & 1.741    & 1.241    & 0.526    & -0.222   & -0.733 \\
     $\left\lfloor T^{1/3} \right\rfloor$         & 3.061$^{**}$     & 2.420$^{**}$     & 1.953$^{*}$     & 1.473    & 1.055    & 0.451    & -0.194   & -0.650 \\
   $\left\lfloor T^{1/2} \right\rfloor$        & 2.430$^{**}$     & 1.917    & 1.551    & 1.182    & 0.857    & 0.368    & -0.159   & -0.551 \\
    T       & 4.976$^{**}$     & 3.772    & 3.098    & 2.378    & 1.712    & 0.665    & -0.251   & -0.870 \\
     \midrule
   \multicolumn{9}{c}{}\\
   \multicolumn{9}{c}{Benchmark: 2\%}\\
     \midrule
     \multicolumn{1}{c|}{\backslashbox{$M$}{Horizon}}  & 1      & 2      & 3      & 4      & 5      & 6      & 7      & 8 \\
    \midrule
    $\left\lfloor T^{2/9} \right\rfloor $       & 4.087$^{**}$     & 3.543$^{**}$     & 3.190$^{**}$     & 2.702$^{**}$     & 2.215$^{**}$     & 1.572    & 0.881    & 0.473 \\
   $\left\lfloor T^{1/3} \right\rfloor$       & 3.526$^{**}$     & 3.059$^{**}$     & 2.734$^{**}$     & 2.312$^{**}$     & 1.898$^{*}$     & 1.355    & 0.770    & 0.418 \\
   $\left\lfloor T^{1/2} \right\rfloor$       & 2.885$^{**}$     & 2.497$^{**}$     & 2.231$^{*}$     & 1.904    & 1.569    & 1.117    & 0.642    & 0.358 \\
    T       & 4.818$^{**}$     & 3.964$^{*}$     & 3.596    & 2.994    & 2.230    & 1.378    & 0.702    & 0.387 \\
    \bottomrule\multicolumn{9}{p{14cm}}{Note: DM test statistic for the null of equal predictive ability of  AR(1) predictions with respect to a rolling average (RA) (top panel) and a constant 2\% (bottom panel) benchmarks. A positive value of the test statistics denotes a larger loss for the benchmark. Long-run variances are computed using the weighted autocovariances in~\eqref{bartlett}. The sample size is 44 and bandwidth values $M$ of $\left\lfloor T^{2/9} \right\rfloor $, $\left\lfloor T^{1/3} \right\rfloor $, $\left\lfloor T^{1/2} \right\rfloor$ and $T$ are respectively 2, 3, 6 and 44.  $^*$ and $^{**}$ denote significance at 10\% and 5\% level using fixed-$b$ critical values.}
    \end{tabular}%
  \label{tab:DMA}%
\end{table}%

\begin{table}[t!]
  \centering
  \caption{Forecast evaluation - weighted periodogram}
    \begin{tabular}{c|llllllll}
    \multicolumn{9}{c}{}\\
    \multicolumn{9}{c}{Benchmark: RA}\\
    \midrule
   \multicolumn{1}{c|}{\backslashbox{$m$}{Horizon}} & 1      & 2      & 3      & 4      & 5      & 6      & 7      & 8 \\
    \midrule
       1 & 1.881 & 1.389 & 1.123 & 0.859 & 0.628 & 0.249 & -0.105 & -0.403 \\
    $\left\lfloor T^{1/4} \right\rfloor $ & 1.736 & 1.397 & 1.141 & 0.896 & 0.671 & 0.300 & -0.135 & -0.500 \\
    $\left\lfloor T^{1/3} \right\rfloor $     & 2.091$^{*}$ & 1.641 & 1.300   & 0.981 & 0.711 & 0.297 & -0.127 & -0.437 \\
    $\left\lfloor T^{1/2} \right\rfloor $    & 2.752$^{**}$ & 2.210$^{**}$ & 1.744 & 1.300   & 0.929 & 0.395 & -0.171 & -0.595 \\
    $\left\lfloor T^{2/3} \right\rfloor $    & 3.665$^{**}$ & 2.941$^{**}$ & 2.360$^{**}$ & 1.783$^{*}$ & 1.269 & 0.530  & -0.220 & -0.725 \\
     \midrule
   \multicolumn{9}{c}{}\\
   \multicolumn{9}{c}{Benchmark: 2\%}\\
     \midrule
     \multicolumn{1}{c|}{\backslashbox{$m$}{Horizon}}  & 1      & 2      & 3      & 4      & 5      & 6      & 7      & 8 \\
    \midrule
    1     & 3.524$^{*}$ & 2.829 & 2.527 & 2.191 & 1.841 & 1.228 & 0.762 & 0.508 \\
    $\left\lfloor T^{1/4} \right\rfloor $     & 2.138$^{*}$ & 1.913 & 1.709 & 1.518 & 1.335 & 1.083 & 0.750  & 0.452 \\
    $\left\lfloor T^{1/3} \right\rfloor $     & 2.564$^{**}$ & 2.238$^{*}$ & 1.925 & 1.592 & 1.294 & 0.916 & 0.546 & 0.303 \\
    $\left\lfloor T^{1/2} \right\rfloor $     & 3.240$^{**}$ & 2.886$^{**}$ & 2.496$^{**}$ & 2.063$^{*}$ & 1.687 & 1.232 & 0.743 & 0.407 \\
    $\left\lfloor T^{2/3} \right\rfloor $    & 4.150$^{**}$ & 3.668$^{**}$ & 3.256$^{**}$ & 2.748$^{**}$ & 2.271$^{**}$ & 1.608 & 0.890  & 0.469 \\
    \bottomrule\multicolumn{9}{p{14cm}}{Note: DM test statistic for the null of equal predictive ability of  AR(1) predictions with respect to a rolling average (RA) (top panel) and a constant 2\% (bottom panel) benchmarks. A positive value of the test statistics denotes a larger loss for the benchmark. Long-run variances are computed using the weighted autocovariance in~\eqref{Daniell}. The sample size is 44 and bandwidth values $m$ of $\left\lfloor T^{1/4} \right\rfloor $, $\left\lfloor T^{1/3} \right\rfloor $, $\left\lfloor T^{1/2} \right\rfloor $ and $\left\lfloor T^{2/3} \right\rfloor$ are respectively 2, 3, 6 and 12. Critical value are obtained from~\eqref{fix_m}.     $^*$ and $^{**}$ denote significance at 10\% and 5\% level using fixed-$m$ critical values.}
    \end{tabular}%
  \label{tab:DMP}%
\end{table}%

We report the outcome of the DM tests  for the null of equal predictive ability of AR(1) predictions with respect to a rolling average and a constant 2\% benchmarks in Tables~\ref{tab:DMA} and\ref{tab:DMP}.

We consider tests in which the DM  statistics are computed estimating the long-run variances as in~\eqref{bartlett} or in \eqref{Daniell}. For $\widehat{\sigma}_A$ we used bandwidths $M$ as $\lfloor T^{2/9} \rfloor$, $\lfloor T^{1/3} \rfloor$, $\lfloor T^{1/2} \rfloor$ and $T$, taking values $2$, $3$, $6$ and $44$ (the case $\lfloor T^{1/4} \rfloor$ is not present as this is again $2$); for $\widehat{\sigma}_P$  the  bandwidth values $m$ of 1, $\left\lfloor T^{1/4} \right\rfloor $, $\left\lfloor T^{1/3} \right\rfloor $, $\left\lfloor T^{1/2} \right\rfloor $ and $\left\lfloor T^{2/3} \right\rfloor$, that for a sample of 44 observations are respectively 2, 3, 6 and 12. In all cases we use critical values from the corresponding fixed smoothing ($\Phi_A(b)$ or $t_{2m}$) distribution.

Results in Tables~\ref{tab:DMA} and \ref{tab:DMP} are in line with our theory, as the outcome of the DM test reflects the autocorrelation documented in  Table~\ref{tab:summary_loss}. In view of the high autocorrelation of $d_t$, the test may be affected by  size distortion, especially with the larger bandwidths $m$, $m=\lfloor T^{1/2} \rfloor$ and $m=\lfloor T^{2/3} \rfloor$, or for short $M$. We therefore discard results for these bandwidths. 
Even results with $m=\lfloor T^{1/3} \rfloor$ or $M=\lfloor T^{1/2} \rfloor$ 
should be considered with caution in this case, especially when the accuracy of the forecasts is compared against the $2 \%$ benchmark, as the autocorrelation seems to be particularly strong in that case. 
Remarkably, this is also the only case in which the equal predictive accuracy null between the AR(1) and the 2$\%$ forecast is significant at $5 \%$ level using $\widehat{\sigma}_P$ and the $m=\lfloor T^{1/3} \rfloor$ bandwidth;
with shorter bandwidths, on the other hand, the null hypothesis of equal predictive accuracy is never rejected at $5\%$ level. This seems to be a disappointing outcome, given the apparent superior performance of forecasts from the AR(1) model at short horizons (as documented in Figure~\ref{fig:Infl} and in Table~\ref{tab:summary_loss}), and we suspect that it is due to the lack of power of the DM test in the presence of autocorrelation. The results when $\widehat{\sigma}_A$ and $M=T$ are used are perhaps slightly more convincing, at least when the one period ahead forecasts are evaluated. Overall, these results highlight how applying the DM test when the loss differential is not weakly dependent may generate unrealiable results.

\section{Conclusion}
In this paper, we have verified that the DM test may be seriously misleading in presence of strong autocorrelation in the loss differential.  \citeasnoun{diebold2015comparing} mentions that  ``[o]f course forecasters may not achieve optimality, resulting in serially correlated, and indeed forecastable, forecast errors. But I(1) nonstationarity of forecast errors takes serial correlation to the extreme''. This is certainly true. However, the DM test is often used against naive benchmarks, for which an I(1) forecast error (or with root close enough to 1, given the sample size) may not be impossible. While this may be seen as an ``abuse'' of the DM test, it seems desirable that a test is robust to such abuse. Our results warn that this is not the case, and that the DM test may perform poorly, generating size distortion or low power, also in the presence weakly dependent processes with autocorrelation close to unity. 

For comparing forecasts, the DM test is ``the only game in town'', as noted in \citeasnoun{diebold2015comparing}. However, one should be aware that the game has its rules. In the empirical application, we used the DM test to compare AR(1) inflation forecasts to two  naive benchmarks. Results indicate that, using a quadratic loss, the test fails exactly because the benchmark forecasts are not optimal under MSE loss, which is not a nice feature. 
This does not mean that one should not use the DM test. Rather, our work suggests that one should take the recommendation in \citeasnoun{diebold2015comparing} to use diagnostic procedures to assess the validity of the assumption of weak dependence of the loss differential very seriously. 

\clearpage
\addcontentsline{toc}{section}{References}
\bibliographystyle{cje}
\bibliography{ref}

\clearpage
\appendix
\section{Derivations}\label{app_proofs}
We provide here a more detailed derivation of some of the results that we claimed in the paper, with accompanying regularity conditions when needed. When we establish bounds we occasionally use $C<\infty$ as a finite bound, not necessarily the same one in every case. Recall that $I\left(\lambda_j\right)$ is the periodogram if $d_t$.

We discuss the results for $DM_A$ and $DM_P$ separately, starting from $DM_P$.

\subsection{Results for Subsection \ref{Subsec_Local}}\label{app_ltu}
All the results in this subsection are for $d_t$ generated as in \eqref{dt}-\eqref{rho_T},  with $c \leq 0$, under Assumption~\ref{ass1}.

\begin{lemma}
\label{ExpectedPeriodogram}
\begin{equation*} 
    I\left(\lambda_j\right)=O_p \left(j^{-2} T^2 \right)
\end{equation*}
\end{lemma}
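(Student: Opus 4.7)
The plan is to exploit the AR(1) recursion directly in the frequency domain. Since $\lambda_j = 2\pi j/T$ is a nonzero Fourier frequency for $1 \leq j \leq T-1$, one has $\sum_{t=1}^{T} e^{i\lambda_j t} = 0$, so the constant $\mu$ in \eqref{dt} contributes nothing to $w(\lambda_j)$. Hence $w(\lambda_j) = (2\pi T)^{-1/2}\, W_y$, with $W_y := \sum_{t=1}^{T} y_t e^{i\lambda_j t}$, and the task reduces to bounding $|W_y|$.

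Multiplying the recursion $y_t = \rho_T y_{t-1} + u_t$ by $e^{i\lambda_j t}$, summing over $t = 1, \ldots, T$, and using $e^{i\lambda_j T} = 1$ yields the identity
\begin{equation*}
(1 - \rho_T e^{i\lambda_j})\, W_y \;=\; W_u - \rho_T e^{i\lambda_j} (y_T - y_0),
\end{equation*}
where $W_u := \sum_{t=1}^{T} u_t e^{i\lambda_j t}$. I would then bound the denominator from below and the two numerator pieces from above.

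For the denominator, the elementary identity $|1 - \rho_T e^{i\lambda_j}|^2 = (1 - \rho_T)^2 + 4 \rho_T \sin^2(\lambda_j/2)$ together with $\rho_T = e^{c/T}$, $c \leq 0$, and the bound $\sin(\pi j/T) \geq 2j/T$ for $1 \leq j \leq \lfloor T/2 \rfloor$ delivers $|1 - \rho_T e^{i\lambda_j}| \geq c_1 j/T$ uniformly in this range of $j$, for some $c_1 > 0$ and all $T$ large. For the numerator, the local-to-unity limit \eqref{Lemma1aP} evaluated at $r = 1$, together with the initial-condition convention $y_0 = O_p(1)$, gives $|y_T - y_0| = O_p(\sqrt{T})$. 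Meanwhile, Assumption~\ref{ass1} ensures that $u_t$ has a bounded spectral density $g_u(\lambda)$, and standard Fourier-analytic calculations for summable linear processes give $E|W_u|^2 = 2\pi T\, g_u(\lambda_j) + o(T) = O(T)$, so $|W_u| = O_p(\sqrt T)$ by Markov's inequality.

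Combining the three bounds, $|W_y| = O_p(T/j) \cdot O_p(\sqrt{T}) = O_p(T^{3/2}/j)$, so that $I(\lambda_j) = (2\pi T)^{-1} |W_y|^2 = O_p(T^2/j^2)$, as claimed. The main obstacle will be the sharp lower bound on $|1 - \rho_T e^{i\lambda_j}|$: one must verify that the local-to-unity root $\rho_T$ does not conspire with the low Fourier frequency to create additional cancellation. The identity above works precisely because $c \leq 0$ keeps $\rho_T \leq 1$, ensuring that the lower bound is governed entirely by the $\sin^2(\lambda_j/2) \asymp j^2/T^2$ term.
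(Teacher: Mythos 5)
Your argument is correct, and for the case $c<0$ it takes a genuinely different route from the paper. You work directly with the summation-by-parts identity $(1-\rho_T e^{i\lambda_j})W_y = W_u - \rho_T e^{i\lambda_j}(y_T-y_0)$, bound the denominator below by $c_1 j/T$ via $|1-\rho_T e^{i\lambda_j}|^2=(1-\rho_T)^2+4\rho_T\sin^2(\lambda_j/2)$ and Jordan's inequality, and bound the numerator by $O_p(\sqrt{T})$ using $E|W_u|^2=O(T)$ (which follows from the summable autocovariances implied by Assumption~\ref{ass1}) and \eqref{Lemma1aP}. The paper uses exactly this identity, but only for the unit-root case $c=0$ (citing Phillips and Shimotsu); for $c<0$ it instead writes $E(I(\lambda_j))$ as a convolution of the spectral density $f=f^\star g$ with the Fej\'er kernel, partitions the integral into four ranges, and invokes the Giraitis--Phillips bound \eqref{GirBound} together with the kernel estimates of Robinson (1995). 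Your approach is more elementary, treats $c<0$ and $c=0$ in one stroke, and still delivers a moment bound uniform in $j$ (since $E|W_u|^2$ and $E|y_T-y_0|^2$ are both $O(T)$), which is what is actually needed to sum the tail $\sum_{j>m}I(\lambda_j)$ in Lemma~\ref{AveragePeriodogram}. What the paper's heavier convolution machinery buys is reusability: essentially the same decomposition is sharpened in Lemma~\ref{Periodogram_Mildly} to the ratio result $f(\lambda_j)^{-1}I(\lambda_j)=1+O_p(j^{-1}\ln(j+1))$, which your crude upper bound cannot produce. Two small points to keep in mind: your $\sin$ lower bound restricts you to $1\le j\le\lfloor T/2\rfloor$, which suffices because the lemma is only applied on that range, and the constant $c_1$ requires $\rho_T$ bounded away from zero, which holds for fixed $c\le 0$ and large $T$.
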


\begin{proof} 
We first present the proof for $c<0$. \\
Denoting $f\left(\lambda\right)$ as the spectral density of $y_t$, $g\left(\lambda\right)$ as the density of $u_t$, and \begin{equation*}
f^\star(\lambda)=\left| 1- \rho \, e ^ {-i \lambda} \right|^{-2}
=\frac{1}{\upsilon^2+2 \rho (1-cos(\lambda))},
\end{equation*} 
where $    \upsilon=1-\rho$, 
then  $f(\lambda)=f^\star(\lambda)g(\lambda)$.

We use for $f^\star(\lambda)$ the same bound as in  \citeasnoun{GiraitisPhillips2012Local}: for $|\rho|<1$, $\lambda \leq \pi$ 
\begin{equation}
    f^\star(\lambda) \leq \frac{1}
    { \upsilon^2 + \rho \lambda^2 /3 }. \label{GirBound}
\end{equation}
Notice that we dropped the reference to $T$ in $\rho_T$ to simplify the notation and to align it to \citeasnoun{GiraitisPhillips2012Local}.

We follow closely the proof in \citeasnoun{Robinson95log-p}, but our proof is easier as we only need to establish an upper bound. 
Then
\begin{equation*}
    E\left(I\left(\lambda_j\right)\right)=\int_{- \pi} ^{\pi} f(\lambda) K (\lambda-\lambda_j)d \lambda
\end{equation*}
where $K(\lambda)$ is proportional to the Fej\'{e}r's kernel, $K(\lambda)=(2 \pi T)^{-1}  \left|\sum\sum_{t,s=1}^T e^{i (t-s)\lambda}\right|^2$. \\
Proceeding as in \citeasnoun{Robinson95log-p} we then partition the integral as 
\begin{equation*}
    \int_{- \pi}^{\pi} = 
    \int_{- \pi}^{- \lambda_j /2} 
    + \int_{- \lambda_j /2}^{\lambda_j /2} 
    + \int_{\lambda_j /2}^{2 \lambda_j} 
    +\int_{2 \lambda_j }^{\pi} 
\end{equation*}
and discuss them separately.

\begin{equation*}
         \int_{- \pi}^{- \lambda_j /2} f(\lambda) K (\lambda-\lambda_j)d \lambda  
    \leq C \ \{sup_{\lambda \in [\lambda_j/2,\pi]} f^\star(\lambda) \} 
    \int_{ \lambda_j /2}^{ \pi}  K (\lambda+\lambda_j)d \lambda 
    \leq C \ \lambda_j^{-2} \ j^{-1}=O(j^{-3}T^2) 
\end{equation*}
where we used the bounds $g(\lambda) \leq C$, $sup_{\lambda \in [\lambda_j/2,\pi]} f^\star(\lambda)  \leq C \lambda_j^{-2}$ from \eqref{GirBound} and $\int_{ \lambda_j /2}^{ \pi}  K (\lambda+\lambda_j)d \lambda=O(j^{-1})$ as in \citeasnoun{Robinson95log-p}, page 1061 in the text above (4.6); the bound $\int_{2 \lambda_j }^{\pi}=O(j^{-3}T^2)$ can be established in the same way. Next, 
\begin{equation*}
\int_{- \lambda_j /2}^{\lambda_j /2} f(\lambda) K (\lambda-\lambda_j)d \lambda  
\leq  \int_{- \lambda_j /2}^{\lambda_j /2} f(\lambda) d \lambda \  \left\{sup_{\lambda \in [-\lambda_j/2, \lambda_j/2]} K (\lambda-\lambda_j) \right\} 
=O (T \times T^{-1} \lambda_j^{-2})=O(T^2 j^{-2})
\end{equation*}
where we bounded $\{ sup_{\lambda \in [-\lambda_j/2, \lambda_j/2]} K (\lambda-\lambda_j) \} 
=O ( T^{-1} \lambda_j^{-2})$ as in \citeasnoun{Robinson95log-p} and $\int_{- \lambda_j /2}^{\lambda_j /2} f(\lambda) d \lambda \leq Var(y_t)=O(T)$. Finally, 
\begin{equation*}
\int_{\lambda_j /2}^{ 2\lambda_j} f(\lambda) K (\lambda-\lambda_j)d \lambda  
\leq C \left\{ \sup_{\lambda \in [\lambda_j/2, 2 \lambda_j]} f^\star(\lambda) \right\}  \int_{ \lambda_j /2}^{2 \lambda_j }   \  K (\lambda-\lambda_j) d \lambda\ 
=O (\lambda_j^{-2}) 
\end{equation*}
where we bounded $\int_{ \lambda_j /2}^{2 \lambda_j }   \  K (\lambda-\lambda_j) d \lambda = O(1) $. This completes the proof for $c<0$.

When $c=0$ we rewrite, as in Lemma A.1 of \citeasnoun{PhillipsShimotsu2004LW} for the unit root case, 
\begin{equation*}
(1-e^{i \lambda}) w(\lambda)
=w_u(\lambda)-\frac{e^{i \lambda}}{\sqrt{2 \pi T}} (e^{i T \lambda} y_T - y_0)
\end{equation*}
where $w(\lambda)$ and $w_u(\lambda)$ is the Fourier transform of $y_t$ and $u_t$, respectively. Thus, bounding 
$E|w_u(\lambda)|=O((E(|w_u(\lambda)|^2))^{1/2})=O(1)$, $E|y_T|=O((E(y_T^2))^{1/2})=O(T^{1/2})$, $|(1-e^{i \lambda})|^{-2} < C \lambda^{-2}$, the result follows immediately.
\end{proof}
\smallskip

\begin{lemma} \label{AveragePeriodogram}
\begin{equation*}
\text{  As } m \rightarrow \infty, \  m/T \rightarrow 0, \ \  \frac{1}{T^2} 2 \pi \sum_{j=1}^m {I(\lambda_j}) \Rightarrow \omega^2    {\frac{1}{2}\int_0 ^1  (J_c(r)-\overline{J_c})^2 dr}  
\end{equation*}
\end{lemma}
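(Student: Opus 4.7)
My plan is to combine the discrete Parseval identity with Lemma~\ref{ExpectedPeriodogram} and the continuous mapping limits \eqref{Lemma1bP}--\eqref{Lemma1cP}. Since $d_t=\mu+y_t$ and $\sum_{t=1}^{T}e^{i\lambda_j t}=0$ for $j=1,\ldots,T-1$, the deterministic mean contributes only at $\lambda_0=0$ and $I(\lambda_j)$ coincides with the periodogram of $y_t$ at the non-zero Fourier frequencies. Parseval then gives
\begin{equation*}
\sum_{j=1}^{T-1} I(\lambda_j) \;=\; \frac{1}{2\pi}\sum_{t=1}^{T}(y_t-\overline{y})^2,
\end{equation*}
and the conjugate symmetry $I(\lambda_{T-j})=I(\lambda_j)$, together with Lemma~\ref{ExpectedPeriodogram} applied at $j=T/2$ to control the possible middle term $I(\pi)=O_p(1)$, gives
\begin{equation*}
\sum_{j=1}^{\lfloor T/2 \rfloor} I(\lambda_j) \;=\; \tfrac{1}{2}\sum_{j=1}^{T-1}I(\lambda_j) + O_p(1).
\end{equation*}

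Dividing by $T^2/(2\pi)$ and invoking the continuous mapping theorem with \eqref{Lemma1bP}--\eqref{Lemma1cP}, so that $T^{-2}\sum_t (y_t-\overline{y})^2 \Rightarrow \omega^2\int_0^1(J_c(r)-\overline{J}_c)^2 dr$, one obtains
\begin{equation*}
\frac{2\pi}{T^2}\sum_{j=1}^{\lfloor T/2 \rfloor} I(\lambda_j) \;\Rightarrow\; \frac{\omega^2}{2}\int_0^1 \bigl(J_c(r)-\overline{J}_c\bigr)^2 dr.
\end{equation*}
The claim is then reduced to showing that the high-frequency tail is negligible, namely $\frac{2\pi}{T^2}\sum_{j=m+1}^{\lfloor T/2 \rfloor}I(\lambda_j) = o_p(1)$. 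Inspection of the proof of Lemma~\ref{ExpectedPeriodogram} delivers the stronger statement $E[I(\lambda_j)] \le C\,j^{-2}T^2$ uniformly in $j$, whence
\begin{equation*}
E\!\left(\frac{2\pi}{T^2}\sum_{j=m+1}^{\lfloor T/2 \rfloor}I(\lambda_j)\right) \;\le\; C\sum_{j=m+1}^{\infty}j^{-2} \;=\; O(m^{-1}),
\end{equation*}
and Markov's inequality delivers the required $o_p(1)$ as $m\to\infty$.

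The main obstacle is the uniform-in-$j$ control of $E[I(\lambda_j)]$: Lemma~\ref{ExpectedPeriodogram} as stated gives only the pointwise bound $I(\lambda_j)=O_p(j^{-2}T^2)$, but its proof (which partitions $\int f(\lambda)K(\lambda-\lambda_j)d\lambda$ against the Fej\'er kernel using the Giraitis--Phillips spectral density bound \eqref{GirBound}, with a separate argument based on the Phillips--Shimotsu identity in the $c=0$ case) in fact yields the required uniform expectation bound. Once that uniformity is in hand, the Parseval identity, the symmetry and a single application of the continuous mapping theorem to an already-established limit assemble the result.
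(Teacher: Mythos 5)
Your proposal is correct and follows essentially the same route as the paper: decompose $\sum_{j=1}^{m}I(\lambda_j)$ as the full sum up to $T/2$ minus the tail, identify the full sum with $\tfrac{1}{2}\sum_t(y_t-\overline{y})^2$ via Parseval and conjugate symmetry, invoke \eqref{Lemma1bP}--\eqref{Lemma1cP} for the limit, and kill the tail using the $O(j^{-2}T^2)$ periodogram bound from Lemma~\ref{ExpectedPeriodogram}. Your added care about the uniform-in-$j$ expectation bound (needed to sum the tail and apply Markov) and about the middle term $I(\pi)$ is a legitimate tightening of the same argument, not a different one.
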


\begin{proof}
We rewrite  
\begin{equation*}
    \frac{1}{T^2} 2 \pi \sum_{j=1}^m {I(\lambda_j})=
    \frac{1}{T^2} 2 \pi \sum_{j=1}^{T/2} {I(\lambda_j})
    -\frac{1}{T^2} 2 \pi \sum_{j=m+1}^{T/2} {I(\lambda_j)}
\end{equation*}
and notice that 
\begin{equation*}
    \frac{1}{T^2} 2 \pi \sum_{j=1}^{T/2} {I(\lambda_j})
    =\frac{1}{T^2} \frac{1}{2} \sum_{t=1}^T {(y_t-\overline{y})^2} \Rightarrow {\omega^2 \frac{1}{2}\int_0 ^1  (J_c(r)-\overline{J_c})^2 dr}
\end{equation*}
using \eqref{Lemma1bP} and \eqref{Lemma1cP}, while
\begin{equation*}
    \frac{1}{T^2} 2 \pi \sum_{j=m+1}^{T/2} {I(\lambda_j)}=O_p(T^{-2} m^{-1} T^2)=O_p( m^{-1})=o_p(1)
\end{equation*}
using Lemma \ref{ExpectedPeriodogram} (notice that that result is not restricted to a band of frequencies degenerating to 0). \\ The result when $c=0$ can be deduced from \citeasnoun{MR2001NarrowBand}: their moments condition is stronger and their proof is more complex than the argument given here because they established more results. 
\end{proof}

\textbf{Proof of Theorem}~\ref{DF-standard} - \textit{Case P} \begin{proof}

\begin{equation*}
\frac{1}{\sqrt{m}} DM_P
= \frac{1}{\sqrt{m}} \sqrt{T} \frac{\overline{y}+\mu}
{\sqrt{\frac{1}{m} 2 \pi \sum_{j=1}^m {I(\lambda_j)} }}
=\frac{\frac{\sqrt{T}}{T}(\overline{y}+\mu)}
{{\sqrt{m}} \sqrt{\frac{1}{T^2}\frac{1}{m} 2 \pi \sum_{j=1}^m {I(\lambda_j)} }}\end{equation*}
where in particular notice that $\frac{\sqrt{T}}{T}\mu \rightarrow 0$
so $ \frac{\sqrt{T}}{T}(\overline{y}+\mu) \Rightarrow \omega \int_0^1 J_c(r)dr$ by a standard FCLT even when $\mu \neq 0$. 
The result thus follows from the convergence in \eqref{Lemma1bP}, Lemma \ref{AveragePeriodogram} and the continuous mapping theorem.
\end{proof}

\textbf{Proof of Theorem}~\ref{DF-fixed}  - \textit{Case P}  
\begin{proof} 
The proof proceeds in the same way as in Theorem \ref{DF-standard}, but in this case we consider $m$ as finite. 
\begin{equation*}
DM_P
= \sqrt{T} \frac{\overline{y}+\mu}
{\sqrt{\frac{1}{m} 2 \pi \sum_{j=1}^m {I(\lambda_j)} }}
=\frac{\frac{\sqrt{T}}{T}(\overline{y}+\mu)}
{\sqrt{\frac{1}{T^2}\frac{1}{m} 2 \pi \sum_{j=1}^m {I(\lambda_j)} }}\end{equation*}
where the numerator is discussed as in as in Theorem \ref{DF-standard}, and ${\frac{1}{T^2}\frac{1}{m} 2 \pi \sum_{j=1}^m {I(\lambda_j)} } \Rightarrow \omega^2 \frac{1}{m} 2 \pi \sum_{j=1}^m Q_{P}(j)$ using the same argument as in Lemma 1 of \citeasnoun{hualde2017fixed}.
\end{proof}

\begin{lemma}\label{autocovs}
\begin{equation*}
\text{As } M\rightarrow \infty, \ M/T \rightarrow 0,  \ \  \frac{1}{T M} \widehat{\sigma}^2_A \rightarrow_d \omega^2 \int_0^1 \left( {J}_c(r)^2-\overline{J}_c^2 \right) dr
\end{equation*}
\end{lemma}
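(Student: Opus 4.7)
The plan is to exploit the fact that, in the local-to-unity regime, every short-lag sample autocovariance $\widehat{\gamma}_l$ with $l\le M=o(T)$ is asymptotically indistinguishable from the sample variance $\widehat{\gamma}_0$: both are $O_p(T)$ and $\widehat{\gamma}_l/T$ converges in distribution to the same random variable $\omega^2\int_0^1(J_c(r)^2-\overline{J}_c^{\,2})\,dr$. The Bartlett weights sum to approximately $M/2$, which is what produces the $TM$ scale in the statement.

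First I would split
\begin{align*}
\frac{\widehat{\sigma}_A^2}{TM}
= \frac{\widehat{\gamma}_0}{TM} + \frac{\widehat{\gamma}_0}{T}\cdot\frac{2}{M}\sum_{l=1}^{M-1}\Bigl(1-\tfrac{l}{M}\Bigr) + \frac{2}{M}\sum_{l=1}^{M-1}\Bigl(1-\tfrac{l}{M}\Bigr)\frac{\widehat{\gamma}_l-\widehat{\gamma}_0}{T}.
\end{align*}
The first term is $O_p(M^{-1})=o_p(1)$. For the second, $\tfrac{2}{M}\sum_{l=1}^{M-1}(1-l/M)\to 1$ is deterministic, while $\widehat{\gamma}_0/T=T^{-2}\sum_{t=1}^T(y_t-\overline{y})^2\to_d \omega^2\int_0^1(J_c(r)^2-\overline{J}_c^{\,2})\,dr$ by \eqref{Lemma1bP}, \eqref{Lemma1cP}, and the continuous mapping theorem. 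By Slutsky's theorem it remains only to show the third (residual) term is $o_p(1)$.

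To control the residual, I would decompose
\[
\widehat{\gamma}_l-\widehat{\gamma}_0 = -\frac{1}{T}\sum_{t=1}^{l}(y_t-\overline{y})^2 + \frac{1}{T}\sum_{t=l+1}^{T}(y_t-\overline{y})(y_{t-l}-y_t).
\]
After division by $T$ the first piece contributes $O_p(l/T)=O_p(M/T)$ uniformly in $l$. For the second, the AR(1) recursion gives $y_t-y_{t-l}=(\rho_T^l-1)y_{t-l}+\sum_{s=t-l+1}^{t}\rho_T^{t-s}u_s$; since $\rho_T=e^{c/T}$ and $l\le M=o(T)$, Assumption~\ref{ass1} yields $E(y_t-y_{t-l})^2\le C\,l$ uniformly in $t$. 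Combined with $\sum_{t=1}^{T}E(y_t-\overline{y})^2=O(T^2)$, Cauchy--Schwarz gives $E\bigl|\sum_{t=l+1}^{T}(y_t-\overline{y})(y_{t-l}-y_t)\bigr|\le C\,T^{3/2}\sqrt{l}$, so that $E\bigl|(\widehat{\gamma}_l-\widehat{\gamma}_0)/T\bigr|\le C\sqrt{l/T}$. Summing,
\[
E\Bigl|\frac{2}{M}\sum_{l=1}^{M-1}\Bigl(1-\tfrac{l}{M}\Bigr)\frac{\widehat{\gamma}_l-\widehat{\gamma}_0}{T}\Bigr|\le C\sqrt{M/T}=o(1),
\]
and Markov's inequality closes the argument.

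The main obstacle is obtaining a bound on $\widehat{\gamma}_l-\widehat{\gamma}_0$ sharp enough to survive summation over $M$ lags: the linear-in-$l$ control of $E(y_t-y_{t-l})^2$ is crucial, because a cruder bound treating $y_t$ and $y_{t-l}$ as generic $O_p(\sqrt{T})$ quantities would only yield $O_p(1)$ per lag and the proof would collapse. This sharpness is precisely what the local-to-unity AR(1) structure supplies, and it is the reason all short-lag autocovariances share the same in-distribution limit.
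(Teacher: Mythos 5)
Your proposal is correct and follows essentially the same route as the paper's own proof: both use the local-to-unity AR(1) recursion to show that every $\widehat{\gamma}_l$ with $l\le M=o(T)$ is asymptotically equivalent to $\widehat{\gamma}_0$ at the $O_p(T)$ scale (the $(1-\rho_T^l)=O(l/T)$ term and the length-$l$ innovation sum each being negligible), and then conclude from the Bartlett weights summing to $M/2$ together with \eqref{Lemma1bP}, \eqref{Lemma1cP} and the continuous mapping theorem. The only cosmetic difference is that you organise the remainder around $\widehat{\gamma}_l-\widehat{\gamma}_0$ and bound it in $L^1$ uniformly over lags, whereas the paper expands $\widehat{\gamma}_l$ around $\frac{1}{T}\sum_{t=l+1}^{T}(y_{t-l}-\overline{y})^2$ and bounds each correction in probability; the resulting orders, $O_p(M/T)$ and $O_p(\sqrt{M/T})$, are the same.
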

\begin{proof}
    Notice that
\begin{equation*}
\widehat{\gamma}_l=\frac{1}{T} \sum_{t=l+1}^T 
(d_t-\overline{d})(d_{t-l}-\overline{d})=\frac{1}{T} \sum_{t=l+1}^T 
(y_t-\overline{y})(y_{t-l}-\overline{y})
\end{equation*}
Using recursive substitutions, 
\begin{equation*}
y_t=\rho^l y_{t-l} + \sum_{j=0}^{l-1}\rho^j u_{t-j}=y_{t-l}-(1-\rho^l) y_{t-l} + \sum_{j=0}^{l-1}\rho^j u_{t-j}
\end{equation*}
and
\begin{equation*}
y_t-\overline{y}=(y_{t-l}-\overline{y})-(1-\rho^l) y_{t-l} + \sum_{j=0}^{l-1}\rho^j u_{t-j}
\end{equation*}
so that 
\begin{equation}\label{gamma_l}
\widehat{\gamma}_l=\frac{1}{T} \sum_{t=l+1}^T 
(y_{t-l}-\overline{y})^2-(1-\rho^l) \frac{1}{T} \sum_{t=l+1}^T  y_{t-l}(y_{t-l}-\overline{y})+\frac{1}{T} \sum_{t=l+1}^T\left(\sum_{j=0}^{l-1}\rho^j u_{t-j}\right)(y_{t-l}-\overline{y}).
\end{equation}
We discuss the three terms in \eqref{gamma_l} separately. For the first one, 
\begin{equation*}
\frac{1}{T} \sum_{t=l+1}^T (y_{t-l}-\overline{y})^2 
=\frac{1}{T} \sum_{t=1}^T (y_{t}-\overline{y})^2 - \frac{1}{T} \sum_{t=1}^l (y_{t}-\overline{y})^2
\end{equation*}
and notice that 
\begin{equation*}
\frac{1}{T^2}  \sum_{t=1}^T (y_{t}-\overline{y})^2 \rightarrow_d \omega^2 \int_0^1 \left( {J}_c(r)^2-\overline{J}_c^2 \right) dr
\end{equation*}
and, in view of Lemma 1 of \citeasnoun{Phillips1987LocUnity}
\begin{equation*}
 \frac{1}{T^2} \sum_{t=1}^l (y_{t}-\overline{y})^2=O_p\left(\frac{1}{T^2} l (T^{1/2})^2\right)=O_p\left(\frac{M}{T}\right).
\end{equation*}
For the second term in \eqref{gamma_l}, using the mean value theorem expansion 
\begin{equation*}
1-\rho^l=1-e^{l c/T}=l c_m/T \text{ for } c \leq\ c_m\leq 0 
\end{equation*}
and bounds from Lemma 1 in \citeasnoun{Phillips1987LocUnity} then 
\begin{equation*}
(1-\rho^l) \frac{1}{T^2} \sum_{t=l+1}^T  y_{t-l}(y_{t-l}-\overline{y})=O_p\left(\frac{M}{T}\right)
\end{equation*}
Finally, as $E\left( (\sum_{j=0}^{l-1}\rho^j u_{t-j})^2 \right)=O(l)$, then with an application of the Cauchy-Schwarz inequality, 
\begin{equation*}
\frac{1}{T^2} \sum_{t=l+1}^T\left(\sum_{j=0}^{l-1}\rho^j u_{t-j}\right)(y_{t-l}-\overline{y}).=O_p\left(\frac{M^{1/2}}{T^{1/2}}\right)
\end{equation*}
so 
\begin{equation*}
\frac{1}{T} \widehat{\gamma}_l \rightarrow_d \omega^2 \int_0^1 \left( {J}_c(r)^2-\overline{J}_c^2 \right) dr
\end{equation*}
Thus, recalling $\sum_{l=1}^M \frac{M-l}{M}=1/2 \frac{M(M+1)}{M}=1/2(M+1)$,
\begin{equation*}
\frac{1}{T M} \widehat{\sigma}_A^2 \rightarrow_d \omega^2 \int_0^1 \left( {J}_c(r)^2-\overline{J}_c^2 \right) dr
\end{equation*}
\end{proof}

\textbf{Proof of Theorem}~\ref{DF-standard} - \textit{Case A} \begin{proof}
The proof proceeds in the same way as for Case A, but using the limit in Lemma 
 \ref{autocovs} instead.
\end{proof}

\textbf{Proof of Theorem}~\ref{DF-fixed}  - \textit{Case A}  
\begin{proof} 
Proceeding as in \citeasnoun{kiefer2005new} 
\begin{equation*}
\frac{1}{T}\widehat{\sigma}_A^2 \rightarrow_d \omega^2 Q_A(b)^2
\end{equation*}
The proof then proceeds in the same way as for Case A.
\end{proof}

\subsection{Results for Subsection \ref{Subsec_mildly}}\label{app_mod}
All the results in this subsection are for $d_t$ generated as in \eqref{dt}, \eqref{ar1}, \eqref{rho_mild}  with $c < 0$, under Assumption~\ref{ass2}. 

We first state some properties that can be derived from Assumption~\ref{ass2} 
\begin{lemma}\label{Regarding_Ass2} \text{ }
\begin{itemize} 
    \item[i.]  The condition in Assumption~\ref{ass2} implies the condition in~\ref{ass1}.
    \item[ii.] The condition in Assumption~\ref{ass2}  is stronger than Assumption~\ref{ass1}.
    \item[iii.] $|g(x+h)-g(x)|<C |h|$ for all $x$, $h$.

\end{itemize}
\end{lemma}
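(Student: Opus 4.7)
The plan is to dispatch the three claims independently; each follows from a short manipulation of the tail bound in Assumption~\ref{ass2}. For part (i), the key is the pointwise estimate $|\psi_j| \leq \sum_{s=j}^\infty |\psi_s| \leq C j^{-1-a}$ for every $j \geq 1$, which is immediate from Assumption~\ref{ass2}. Summing against $j^{1/2}$ gives
\[
\sum_{j=1}^\infty j^{1/2} |\psi_j| \leq C \sum_{j=1}^\infty j^{-1/2-a} < \infty,
\]
since $a > 2 > 1/2$. This delivers the summability requirement in Assumption~\ref{ass1}; the non-degeneracy requirement $(\sum_j \psi_j)^2 > 0$ is understood to be maintained.

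For part (ii), I would show that the implication is strict by exhibiting a sequence satisfying Assumption~\ref{ass1} but not Assumption~\ref{ass2}. Taking $\psi_j = j^{-2}$ for $j \geq 1$ (with $\psi_0$ chosen so that $\sum_j \psi_j \neq 0$) gives $\sum j^{1/2}|\psi_j| = \sum j^{-3/2} < \infty$, yet the tail $\sum_{s=j}^\infty s^{-2}$ decays only like $j^{-1}$, which fails the required $j^{-1-a}$ bound for any $a > 0$.

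Part (iii) is the substantive one. Writing $g(\lambda) = (\varsigma^2/2\pi)\,|\psi(e^{-i\lambda})|^2$ with $\psi(z)=\sum_{j=0}^\infty \psi_j z^j$, I would justify termwise differentiation by noting that Assumption~\ref{ass2} yields $\sum_{j \geq 1} j |\psi_j| \leq C \sum j^{-a} < \infty$ (again via $|\psi_j| \leq Cj^{-1-a}$ and $a>2>1$), so the derived series converges absolutely and uniformly in $\lambda$. A product bound then gives
\[
\sup_\lambda |g'(\lambda)| \leq \frac{\varsigma^2}{\pi} \left(\sum_j |\psi_j|\right)\left(\sum_j j|\psi_j|\right) < \infty,
\]
and the Lipschitz estimate $|g(x+h)-g(x)| \leq C|h|$ follows from the mean value theorem, with the constant independent of $x$ by periodicity.

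None of the three parts presents a genuine obstacle; the only point requiring a moment of care is the termwise differentiation in (iii), and this is comfortably underwritten by the rapid decay $|\psi_j| = O(j^{-1-a})$ with $a>2$. The overall argument is essentially bookkeeping once this decay is extracted from the hypothesis.
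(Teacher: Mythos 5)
Your proof is correct in all three parts, but it takes a genuinely more elementary route than the paper's. For part (i) the paper works with the tail sums directly via Abel summation, rewriting $\sum_{j=1}^{n} j^{1/2}|\psi_j|$ as $\sum_j|\psi_j|$ plus increments $(j+1)^{1/2}-j^{1/2}=O(j^{-1/2})$ weighted against the tails $\sum_{s>j}|\psi_s|\le Cj^{-1-a}$; you instead pass immediately to the pointwise bound $|\psi_j|\le\sum_{s\ge j}|\psi_s|\le Cj^{-1-a}$ and sum. Your shortcut is legitimate because $a>2$ leaves ample room, though summation by parts is the sharper tool in general: it would still yield $\sum j^{1/2}|\psi_j|<\infty$ from a tail bound as weak as $Cj^{-1/2-\epsilon}$, where the crude pointwise bound would not, so the paper's argument survives weakenings of the hypothesis that yours does not. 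For part (ii) you use $\psi_j=j^{-2}$ where the paper uses $\psi_j=(j+1)^{-3/2-\eta}$ with $\eta>0$ small; both are valid witnesses satisfying Assumption~\ref{ass1} while violating the tail condition of Assumption~\ref{ass2}. For part (iii) both arguments pivot on $\sum_k k|\psi_k|<\infty$ (which the paper again extracts by summation by parts and you by the pointwise bound, since $\sum_j j\cdot j^{-1-a}<\infty$ for $a>1$), but the paper then stops and invokes Theorem 2.1 of \citeasnoun{Moricz2006smoothness} to place $g$ in $\mathrm{Lip}(1)$, whereas you make the step self-contained by termwise differentiation of $|\psi(e^{-i\lambda})|^2$ and the mean value theorem; your uniform bound $\sup_\lambda|g'(\lambda)|\le(\varsigma^2/\pi)\bigl(\sum_j|\psi_j|\bigr)\bigl(\sum_j j|\psi_j|\bigr)$ is correct and the termwise differentiation is justified by the absolute, uniform convergence of the differentiated series. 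The self-contained version is arguably friendlier to the reader; the citation keeps the appendix shorter. Your remark that the non-degeneracy condition $(\sum_j\psi_j)^2>0$ is maintained rather than implied matches the paper's implicit reading of the lemma.
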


\begin{proof}
   The condition in Assumption~\ref{ass2} implies the condition in~\ref{ass1}. By summation by parts, for any $n$,
\begin{align*}
& \sum_{j=0}^{n}
 j^{1/2}|\psi_j| 
 =\sum_{j=1}^{n} j^{1/2}|\psi_j|
 =1\sum_{j=1}^{n}|\psi_j|+(\sum_{j=1}^{n-1}{((j+1)^{1/2}-j^{1/2})}  \sum_{s=j+1}^{n} |\psi_s| \\
 & \sum_{j=0}^{n}
 j^{1/2}|\psi_j| \leq 
\sum_{j=1}^{n}|\psi_j|+C\sum_{j=1}^{n-1}{j^{-1/2}}  \sum_{s=j+1}^{n} |\psi_s|
\leq 
\sum_{j=1}^{\infty}|\psi_j|+C\sum_{j=1}^{\infty}{j^{-1/2}}  \sum_{s=j+1}^{\infty} |\psi_s| 
\end{align*}
so 
\begin{equation*}
 \sum_{j=1}^{\infty}j^{1/2}|\psi_j|     \leq 
C+C\sum_{j=1}^{\infty}{j^{-1/2}}  j^{-1-\alpha}<C
\end{equation*}
To see that the reverse is not true, which means that Assumption~\ref{ass2}  really strengthens~\ref{ass1}, notice that $\psi_j=(j+1)^{-3/2-\eta}$ for $\eta>0$ and suitably small meets Assumption~\ref{ass1} but not~\ref{ass2}.

Moreover,
another application of summation by parts gives 
\begin{align*}
  &  \sum_{k=1}^n  {k |\psi_{j+k}|} = 1 \times \sum_{k=1}^n  { |\psi_{j+k}|} + 
    \{\sum_{k=1}^{n-1} {((k+1)-k)} \sum_{s=k+1}^{n} {|\psi_{j+s}|} \} =  \sum_{k=1}^n  { |\psi_{j+k}|} + 
    \sum_{k=1}^{n-1}  \sum_{s=k+1}^{n} {|\psi_{j+s}|} \\
    &
    \sum_{k=1}^\infty  {k |\psi_{j+k}|} 
    \leq  \sum_{k=1}^\infty  { |\psi_{j+k}|} + 
    \sum_{k=1}^{\infty}  \sum_{s=k+1}^{\infty} {|\psi_{j+k}|}
    \leq C
    j^{-1-a} +
    \sum_{k=1}^{\infty} 
    (k+j)^{-1-a}
    \leq C
    j^{-1-a} + C j^{-a}
\end{align*}
so Assumption~\ref{ass2} also implies that 
\begin{equation*}
    \sum_{k=0}^\infty {k |\psi_k|} < \infty
\end{equation*}
which is a sufficient condition for Theorem 2.1 of \citeasnoun{Moricz2006smoothness}, see page 1169. 
Denoting $g(\lambda)$ the spectral density of $u_t$, Theorem 2.1 of   implies \citeasnoun{Moricz2006smoothness} that $g(\lambda)$ belongs either to the Lipschitz class Lip(1), for which 
\begin{equation*}
    |g(x+h)-g(x)|<C |h|
\end{equation*}
for all $x$, $h$.
\end{proof}

\begin{lemma} \label{LimitVariance}
\textit{As $T \rightarrow  \infty$}
\begin{equation}
\frac{1}{T^{1+\alpha}} \sum_{t=1}^T {(d_t-\overline{d})^2}=\frac{1}{T^{1+\alpha}} \sum_{t=1}^T {(d_t-\mu)^2}+o_p(1) \rightarrow_p \frac{\omega^2}{-2c}     
\end{equation}
\end{lemma}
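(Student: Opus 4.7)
The statement bundles two assertions: the centred and uncentred normalised sums agree up to $o_p(1)$, and their common limit is $\omega^2/(-2c)$. I would handle these in sequence, first the reduction, then a mean/variance argument for $T^{-(1+\alpha)}\sum_t y_t^2$.

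\emph{Reduction step.} Since $d_t-\mu=y_t$ and $\overline d-\mu=\overline y$, expanding the square yields $\sum_{t=1}^T(d_t-\overline d)^2 = \sum_{t=1}^T y_t^2 - T\overline y^2$, so the discrepancy divided by $T^{1+\alpha}$ equals $\overline y^2/T^\alpha$. The rate $\overline y = O_p(T^{\alpha-1/2})$ coming from Theorem~2.1 of \citeasnoun{GiraitisPhillips2012Local} (cited in the paper just above the lemma, as the source of the CLT at rate $\sqrt{T^{1-2\alpha}}$) gives $\overline y^2/T^\alpha = O_p(T^{\alpha-1}) = o_p(1)$ since $\alpha<1$, closing this part.

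\emph{Mean convergence.} Using the stationary $\mathrm{MA}(\infty)$ representation $y_t = \sum_{k\geq 0}\rho_T^k u_{t-k}$, a direct calculation gives
\begin{equation*}
E(y_t^2) = \frac{1}{1-\rho_T^2}\sum_{h\in\mathbb{Z}}\rho_T^{|h|}\gamma_u(h).
\end{equation*}
Assumption~\ref{ass2} combined with Lemma~\ref{Regarding_Ass2} delivers $|\gamma_u(h)|\leq C|h|^{-1-a}$ with $a>2$, so $\sum_h|\gamma_u(h)|<\infty$; dominated convergence then yields $\sum_h\rho_T^{|h|}\gamma_u(h)\to\omega^2$ as $\rho_T\to 1$. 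Combined with $1-\rho_T^2 = -2c/T^\alpha + O(T^{-2\alpha})$ this gives $E(y_t^2)\sim\omega^2 T^\alpha/(-2c)$, and summing over $t$ yields $T^{-(1+\alpha)}\sum_t E(y_t^2)\to\omega^2/(-2c)$. When $y_0=O_p(1)$ rather than stationary, the transient $\rho_T^t y_0$ contributes $O_p\!\left(y_0^2\sum_t\rho_T^{2t}\right)=O_p(T^\alpha)$ to $\sum y_t^2$, which is $o_p(T^{1+\alpha})$; the cross term is controlled by Cauchy--Schwarz and is of the same order.

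\emph{Variance control and conclusion.} Writing $y_t = \sum_{k\geq 0}b_{T,k}\varepsilon_{t-k}$ with $b_{T,k}=\sum_{j=0}^k\rho_T^{k-j}\psi_j$, the standard fourth-order expansion for a linear process in iid noise bounds $|\mathrm{Cov}(y_s^2,y_t^2)|$ by $2\gamma_y(s-t)^2$ plus a fourth-cumulant piece of the same order, with $|\gamma_y(h)|=O(\rho_T^{|h|}T^\alpha)$. Hence $\sum_h\gamma_y(h)^2 = O(T^{3\alpha})$ and
\begin{equation*}
\mathrm{Var}\bigl(T^{-(1+\alpha)}\textstyle\sum_t y_t^2\bigr) = O\bigl(T^{-(2+2\alpha)}\cdot T\cdot T^{3\alpha}\bigr) = O(T^{\alpha-1})\to 0,
\end{equation*}
and Chebyshev closes the argument. \emph{Main obstacle:} the variance bound nominally requires a fourth-moment condition on $\varepsilon_t$ that is not explicitly imposed by Assumption~\ref{ass1}/\ref{ass2}; this can be closed either by truncating $\varepsilon_t$ at a slowly-growing level and exploiting the non-negativity of $y_t^2$ for a monotone $L^1$ argument, or more expediently by invoking the analogous convergence for moderately integrated AR(1) processes in Theorem~2.1 of \citeasnoun{GiraitisPhillips2012Local}, already relied upon for the companion CLT.
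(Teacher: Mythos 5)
Your proposal is correct, and your reduction step is exactly the paper's: the authors also write $\sum_{t=1}^T(d_t-\overline d)^2=\sum_{t=1}^T(d_t-\mu)^2-T(\overline d-\mu)^2$ and dispose of the discrepancy via $(\overline d-\mu)^2=O_p(T^{2\alpha-1})$ from Theorem~2.1 of Giraitis and Phillips (2012), giving $O_p(T^{\alpha-1})=o_p(1)$. Where you diverge is the main limit $T^{-(1+\alpha)}\sum_t(d_t-\mu)^2\rightarrow_p\omega^2/(-2c)$: the paper proves nothing here, simply citing Phillips and Magdalinos (2007, book chapter) and equations (2.13) and (2.16) of Giraitis and Phillips (2012) with $\upsilon=1-\rho_T=-cT^{-\alpha}$, whereas you give a self-contained first- and second-moment argument. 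Your mean computation $E(y_t^2)=(1-\rho_T^2)^{-1}\sum_h\rho_T^{|h|}\gamma_u(h)\sim\omega^2T^\alpha/(-2c)$ is clean and correct under Assumption~\ref{ass2}, and the variance bound $O(T^{\alpha-1})$ via $\sum_h\gamma_y(h)^2=O(T^{3\alpha})$ is the right order. The one genuine issue is the fourth-moment condition on $\varepsilon_t$ needed for the covariance expansion of $y_s^2,y_t^2$, which Assumption~\ref{ass1} does not supply ($E(\varepsilon_t^2)=\varsigma^2$ only); you correctly flag this yourself. The paper avoids the issue entirely by delegating to the cited references (which carry their own moment conditions), so your fallback of invoking the same external results is in effect what the authors do --- though note they point to (2.13) and (2.16) of Giraitis and Phillips (2012) for this step, not their Theorem~2.1, which is the CLT used only in the reduction. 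Net: your route is more transparent and nearly self-contained at the cost of an extra moment assumption; the paper's is shorter but entirely citation-dependent for the main limit.
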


\begin{proof} The limit
\begin{equation*}
    \frac{1}{T^{1+\alpha}} \sum_{t=1}^T {(d_t-\mu)^2} \rightarrow_p \frac{\omega^2}{-2c}
\end{equation*}
is already in \citeasnoun{PM2007Chapterinbook}; it can also be derived from (2.13) and (2.16) of \citeasnoun{GiraitisPhillips2012Local}, setting $\upsilon=1-\rho_T=cT^{-\alpha}$
and taking the limit for $T \rightarrow \infty$.\\
Next, rewriting $\sum_{t=1}^T {(d_t-\overline{d})^2}= \sum_{t=1}^T {(d_t-\mu)^2}-T{(\overline{d}-\mu)^2}$,
from the CLT on Theorem 2.1 of \citeasnoun{GiraitisPhillips2012Local}, ${(\overline{d}-\mu)^2}=O_p\left(T^{2\alpha-1}\right)$, so 
\begin{equation*}
  \frac{1}{T^{1+\alpha}} \sum_{t=1}^T {(d_t-\overline{d})^2}-\frac{1}{T^{1+\alpha}} \sum_{t=1}^T {(d_t-\mu)^2}=O_p(T^{-1-\alpha} \ T \ T^{2 \alpha -1} )=O_p(T^{ \alpha -1} ).
\end{equation*}
  The lemma is established as we combine these results.
\end{proof}
  
  \begin{lemma} \label{BoundPeriodogramMildly}
\begin{equation*} 
    I\left(\lambda_j\right)=
    O_p \left(j^{-2} T^2 \right)
\end{equation*} 
\end{lemma}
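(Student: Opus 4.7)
The plan is to establish the bound in expectation, $E(I(\lambda_j)) = O(T^2/j^2)$, from which the claim $I(\lambda_j) = O_p(T^2/j^2)$ follows by Markov's inequality (since $I(\lambda_j) \geq 0$). Writing $E(I(\lambda_j)) = \int_{-\pi}^{\pi} f(\lambda) K(\lambda - \lambda_j) \, d\lambda$ with $K$ proportional to the Fej\'{e}r kernel, I would mirror the Robinson-style splitting used in the proof of Lemma \ref{ExpectedPeriodogram}: partition the integral into the four intervals $[-\pi, -\lambda_j/2]$, $[-\lambda_j/2, \lambda_j/2]$, $[\lambda_j/2, 2\lambda_j]$ and $[2\lambda_j, \pi]$ and bound each.

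The spectral density still factorises as $f(\lambda) = f^\star(\lambda) g(\lambda)$, the Giraitis--Phillips bound $f^\star(\lambda) \leq 1/(\upsilon^2 + \rho \lambda^2/3)$ recorded in \eqref{GirBound} continues to apply, and Lemma \ref{Regarding_Ass2}(iii) gives $g(\lambda) \leq C$ uniformly. The only substantive change relative to the local-to-unity case is that now $\upsilon = 1 - \rho_T = -c/T^\alpha$, so $\upsilon^{-2} = O(T^{2\alpha})$ rather than $O(T^2)$, and $E(y_t^2) = O(T^\alpha)$ rather than $O(T)$, by Lemma \ref{LimitVariance}. With these replacements, the middle integral is bounded by $E(y_t^2) \cdot \sup_\lambda K(\lambda - \lambda_j) = O(T^\alpha \cdot T^{-1} \lambda_j^{-2}) = O(T^{1+\alpha}/j^2) = o(T^2/j^2)$ because $\alpha < 1$, while the two outer intervals and the third interval reduce, via the standard Fej\'{e}r-kernel integral estimates recalled in the proof of Lemma \ref{ExpectedPeriodogram}, to bounding $\sup f^\star$ from above by $C \min(\upsilon^{-2}, \lambda_j^{-2})$, and this makes each of their contributions $O(T^2/j^2)$.

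The main obstacle, although modest, is the case analysis for $\sup f^\star$ on the intervals adjacent to $\lambda_j$: when $j \leq T^{1-\alpha}$ one has $\lambda_j \lesssim \upsilon$, so the binding bound is $\upsilon^{-2} = O(T^{2\alpha})$, which is itself $O(T^2/j^2)$ precisely because $j \leq T^{1-\alpha}$; when $j > T^{1-\alpha}$ instead, $\lambda_j > \upsilon$ and the binding bound is $C/\lambda_j^2 = O(T^2/j^2)$ directly. Checking both regimes cleanly is the key technical step. Summing the four contributions yields $E(I(\lambda_j)) = O(T^2/j^2)$, and Markov's inequality then delivers $I(\lambda_j) = O_p(T^2/j^2)$ as claimed.
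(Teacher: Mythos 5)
Your proposal is correct and is essentially the paper's own proof: the paper simply states that the argument of Lemma \ref{ExpectedPeriodogram} goes through with the bound $\int_{-\pi}^{\pi} f(\lambda)\,d\lambda = O(T^\alpha)$ in place of $O(T)$, which is exactly the substitution you carry out for the middle integral, the remaining intervals being handled by the same kernel estimates and the bound $f^\star(\lambda)\le C\lambda^{-2}$ (your two-regime case analysis is harmless but not needed, since $C\lambda_j^{-2}=O(T^2 j^{-2})$ covers both regimes at once).
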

\begin{proof} The proof follows as in Lemma \ref{ExpectedPeriodogram}, but using the bound $\int_{-\pi}^{\pi}f(\lambda)d\lambda=O(T^\alpha)$. 
\end{proof}

\begin{lemma} \label{Periodogram_Mildly}
For frequencies $\lambda_j$ such that $j<m$, $m T^{\alpha-1} \rightarrow 0$ as $T \rightarrow \infty$,
\begin{equation*} 
     f(\lambda_j)^{-1} I\left(\lambda_j\right)
    =1+O_p (j^{-1} \ln(j+1))
\end{equation*} 
\end{lemma}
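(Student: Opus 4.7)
The strategy is to express $I(\lambda_j)$ in the frequency domain using the AR(1) structure, peel off the boundary contribution, and then approximate the resulting periodogram of $u_t$ by $g(\lambda_j)$ through a Fejér-kernel argument that exploits the Lipschitz smoothness of $g$ from Lemma~\ref{Regarding_Ass2}(iii).

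\textbf{Step 1 (DFT identity from the AR(1) recursion).} Multiplying $y_t-\rho_T y_{t-1}=u_t$ by $e^{i\lambda_j t}$, summing over $t=1,\dots,T$, and using $e^{i\lambda_j T}=1$, a shift of index gives
\[
(1-\rho_T e^{i\lambda_j})\sum_{t=1}^T y_t e^{i\lambda_j t}=\sum_{t=1}^T u_t e^{i\lambda_j t}+\rho_T e^{i\lambda_j}(y_0-y_T).
\]
Dividing by $\sqrt{2\pi T}(1-\rho_T e^{i\lambda_j})$ and using $f(\lambda_j)=|1-\rho_T e^{i\lambda_j}|^{-2}g(\lambda_j)$, I would obtain
\[
f(\lambda_j)^{-1}I(\lambda_j)=g(\lambda_j)^{-1}\bigl|w_u(\lambda_j)+B_j\bigr|^2,\qquad B_j=\frac{\rho_T e^{i\lambda_j}}{\sqrt{2\pi T}}(y_0-y_T).
\]

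\textbf{Step 2 (Controlling the boundary term).} Under \eqref{rho_mild}, $\mathrm{Var}(y_t)\asymp \omega^2/(1-\rho_T^2)\asymp T^{\alpha}/(-2c)$, so $|y_0-y_T|=O_p(T^{\alpha/2})$ and $|B_j|=O_p(T^{(\alpha-1)/2})$. Since Assumption~\ref{ass1} (which follows from Assumption~\ref{ass2} by Lemma~\ref{Regarding_Ass2}(i)) gives $g$ bounded and bounded away from zero on a neighbourhood of the origin, expanding the square and using Cauchy-Schwarz with $E|w_u(\lambda_j)|^2=O(1)$ yields
\[
f(\lambda_j)^{-1}I(\lambda_j)=g(\lambda_j)^{-1}I_u(\lambda_j)+O_p\bigl(T^{(\alpha-1)/2}\bigr)+O_p\bigl(T^{\alpha-1}\bigr).
\]
The bandwidth restriction $mT^{\alpha-1}\to 0$ (hence $jT^{\alpha-1}\to 0$ for $j\le m$) pushes both remainders to $o_p(j^{-1}\ln(j+1))$ uniformly in $j<m$.

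\textbf{Step 3 (Periodogram of $u_t$ versus $g(\lambda_j)$).} For the leading term I would follow the four-region partition of $\int_{-\pi}^{\pi} K_T(\lambda-\lambda_j)g(\lambda)\,d\lambda$ used in Lemma~\ref{ExpectedPeriodogram}, but now exploit the Lipschitz bound $|g(\lambda)-g(\lambda_j)|\le C|\lambda-\lambda_j|$ from Lemma~\ref{Regarding_Ass2}(iii). Combined with the standard Fejér-kernel mass bounds (Robinson, 1995, proof of Theorem~2) this gives $E[I_u(\lambda_j)]=g(\lambda_j)+O(T^{-1}\ln(j+1))$; then a Bartlett-style decomposition $w_u(\lambda_j)=\psi(\lambda_j)w_\varepsilon(\lambda_j)+R_j$ with $E|R_j|^2=O(j^{-1})$ (Assumption~\ref{ass2} is exactly what delivers this remainder rate by summation-by-parts, as in the proof of Lemma~\ref{Regarding_Ass2}) together with $E|w_\varepsilon(\lambda_j)|^2=\varsigma^2/(2\pi)$ yields the stochastic fluctuation bound, leading to $g(\lambda_j)^{-1}I_u(\lambda_j)=1+O_p(j^{-1}\ln(j+1))$.

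\textbf{Main obstacle.} The hard part is step 3: obtaining the precise $j^{-1}\ln(j+1)$ rate, rather than the crude $O(1)$ of Lemma~\ref{ExpectedPeriodogram}, requires careful bookkeeping of the Fejér kernel in the band $|\lambda-\lambda_j|\le \lambda_j/2$ where $g$ is almost constant, and in the tails $|\lambda|\ge 2\lambda_j$ where the kernel decays like $(\lambda-\lambda_j)^{-2}/T$ and the $\ln(j+1)$ factor appears from integrating $1/(\lambda-\lambda_j)^2$ against a Lipschitz increment. Steps 1 and 2 are routine DFT manipulations, and the moderate-deviations assumption enters only through the sharp $O(T^\alpha)$ bound on $\mathrm{Var}(y_t)$ together with $mT^{\alpha-1}\to 0$, which absorb the boundary term into the desired order.
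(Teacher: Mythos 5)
Your route is genuinely different from the paper's. The paper never leaves the process $y_t$: it writes $f(\lambda_j)^{-1}E(I(\lambda_j))-1$ as a convolution of $f(\lambda)-f(\lambda_j)$ with the Fej\'{e}r kernel and controls the four regions using the Giraitis--Phillips bound \eqref{GirBound} on $f^\star$ and its derivative together with the Lipschitz property of $g$ from Lemma~\ref{Regarding_Ass2}. You instead pass to the innovations via the exact DFT identity $(1-\rho_T e^{i\lambda_j})w(\lambda_j)=w_u(\lambda_j)+B_j$ (the identity the paper uses only for the $c=0$ case of Lemma~\ref{ExpectedPeriodogram}) and then invoke the standard short-memory periodogram approximation for $u_t$. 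This is an attractive idea because it isolates all of the near-unit-root behaviour in the single boundary term $B_j$, but it is not how the paper proceeds, and it introduces a term the paper never has to face.

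That term is where the gap lies. Expanding $|w_u(\lambda_j)+B_j|^2$ produces the cross term $2\,\mathrm{Re}\bigl(\overline{w_u(\lambda_j)}\,B_j\bigr)$, which under your Cauchy--Schwarz argument is only $O_p\bigl(T^{(\alpha-1)/2}\bigr)$. For this to be $O_p(j^{-1}\ln(j+1))$ you need $j\leq C\,T^{(1-\alpha)/2}\ln(j+1)$, which is strictly stronger than what the hypothesis $j<m$, $mT^{\alpha-1}\rightarrow 0$ delivers: with $\alpha=1/2$ and $m=\lfloor T^{0.4}\rfloor$ one has $mT^{\alpha-1}=T^{-0.1}\rightarrow 0$, yet for $j\asymp m$ the cross term is of order $T^{-1/4}$ while $j^{-1}\ln(j+1)\asymp T^{-0.4}\ln T$, so the remainder exceeds the claimed rate by a diverging factor. (The squared term $|B_j|^2=O_p(T^{\alpha-1})$ is fine, since $jT^{\alpha-1}\rightarrow 0$.) To close the gap you would either have to strengthen the bandwidth condition to roughly $m=O(T^{(1-\alpha)/2}\ln T)$, or show that $w_u(\lambda_j)$ and $y_T-y_0$ are sufficiently decorrelated that the cross term is genuinely of order $T^{\alpha-1}$ rather than the Cauchy--Schwarz rate; neither is done. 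The paper sidesteps the issue entirely because $E(I(\lambda_j))=\int_{-\pi}^{\pi}f(\lambda)K(\lambda-\lambda_j)\,d\lambda$ holds exactly for the stationary triangular array, with no boundary term. A secondary remark: your final claim $g(\lambda_j)^{-1}I_u(\lambda_j)=1+O_p(j^{-1}\ln(j+1))$ cannot hold literally for a single ordinate (the normalised periodogram has a nondegenerate limit law); the paper's own proof shares this looseness, since what it actually bounds, and what Lemma~\ref{Daniell_Mildly} needs, is the expectation plus control of the fluctuations after averaging over $j$.
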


\begin{proof}
We consider 
\begin{equation*}
    f(\lambda_j)^{-1}    E\left(I\left(\lambda_j\right)\right)-1
    =f(\lambda_j)^{-1}
    \int_{- \pi} ^{\pi} \left(f(\lambda)-f(\lambda_j)\right) K (\lambda-\lambda_j)d \lambda
\end{equation*}
and again we evaluate this integral over subsets of $(-\pi, \pi)$ as in Lemma \ref{ExpectedPeriodogram}. 
Then, 
\begin{equation}\label{BoundL4case1}
    f(\lambda_j)^{-1}
    \int_{- \pi} ^{-\lambda_j/2} \left(f(\lambda)-f(\lambda_j)\right) K (\lambda-\lambda_j)d \lambda
    \leq 
    C \ f^\star(\lambda_j)^{-1} \{sup_{\lambda \in [\lambda_j/2,\pi]} f^\star(\lambda) \} 
    \int_{ \lambda_j /2}^{ \pi}  K (\lambda+\lambda_j)d \lambda 
\end{equation}
where again we used $ \{sup_{\lambda \in [\lambda_j/2,\pi]} f(\lambda) \} \leq \{sup_{\lambda \in [\lambda_j/2,\pi]} g(\lambda) \} \{sup_{\lambda \in [\lambda_j/2,\pi]} f^\star(\lambda) \}$
and $g(\lambda)<C$ uniformly in $\lambda$ and   $g(\lambda_j)^{-1}<C$. Recalling that $f^\star(\lambda_j)^{-1}={\upsilon^2+2 \rho (1-cos(\lambda_j))}$, we bound $ f^\star(\lambda_j)^{-1} \leq  \upsilon^2+ C (sin(\lambda_j/2))^2 \leq \upsilon^2+ C (\lambda_j/2)^2 \leq \upsilon^2+ C \lambda_j^2 $ where we used $sin(\lambda_j/2) \sim \lambda_j/2 $ as $j/T \rightarrow 0 $. Then, 
\begin{align*}
& \ f^\star(\lambda_j)^{-1} \{sup_{\lambda \in [\lambda_j/2,\pi]} f^\star(\lambda) \} 
\leq
(\upsilon^2+ C \lambda_j^2)
\{sup_{\lambda \in [\lambda_j/2,\pi]} f^\star(\lambda) \} \\
& \leq 
\upsilon^2 \{sup_{\lambda \in [\lambda_j/2,\pi]} f^\star(\lambda) \} 
+ C \lambda_j^2 
\{sup_{\lambda \in [\lambda_j/2,\pi]} f^\star(\lambda) \} 
\leq 
\upsilon^2 \upsilon^{-2} 
+ C \lambda_j^2 
\lambda_j^{-2} \leq C
\end{align*}
using \eqref{GirBound}.
Thus, recalling $\int_{ \lambda_j /2}^{ \pi}  K (\lambda+\lambda_j)d \lambda =O(j^{-1})$, then $\eqref{BoundL4case1}=O(j^{-1})$. The bound $ f(\lambda_j)^{-1}\int_{2 \lambda_j }^{\pi}=O(j^{-1})$ can be established in the same way. 
\\
Next, 
\begin{equation}\label{BoundL4case2}
f(\lambda_j)^{-1} \int_{- \lambda_j /2}^{\lambda_j /2} \left(f(\lambda)-f(\lambda_j)\right) K (\lambda-\lambda_j)d \lambda  
\leq C \ f{^\star(\lambda_j)}^{-1}  \{sup_{\lambda \in [-\lambda_j/2, \lambda_j/2]} 
K(\lambda-\lambda_j) \} 
\int_{- \lambda_j /2}^{\lambda_j /2} \left(f(\lambda)-f(\lambda_j)\right)d \lambda  
\end{equation}
where again we bound $sup_{\lambda \in [-\lambda_j/2, \lambda_j/2]} K (\lambda-\lambda_j) \} 
=O ( T^{-1} \lambda_j^{-2})$; moreover, 
\begin{equation*}
    \int_{- \lambda_j /2}^{\lambda_j /2} \left(f(\lambda)-f(\lambda_j)\right)d \lambda = \int_{- \lambda_j /2}^{\lambda_j /2} \left(f(\lambda) -f(0)+f(0)-f(\lambda_j)\right)d \lambda 
    \leq 
    2 \int_{0}^{\lambda_j /2} \left|f(\lambda) -f(0)\right| d\lambda+\left|f(0)-f(\lambda_j)\right|\lambda_j
\end{equation*}
From \citeasnoun{GiraitisPhillips2012Local}, page 175, 
\begin{equation*}
    \left|f(\lambda) -f(0)\right|
    \leq \lambda^2 f^\star(\lambda) \upsilon^{-2}+ \lambda^2  \upsilon^{-2}
\end{equation*}
so 
\begin{equation*}
   2 \int_{0}^{\lambda_j /2} \left|f(\lambda) -f(0)\right| d\lambda+\left|f(0)-f(\lambda_j)\right|\lambda_j 
    \leq     C  \lambda_j  \upsilon^{-2}
\end{equation*}
and \eqref{BoundL4case2} is bounded by 
\begin{equation*}
    C \ (\upsilon^2+\lambda_j^2)
  \  T^{-1} \lambda_j^{-2}
 \  \  \lambda_j  \upsilon^{-2} 
    \leq C (T^{-1} \lambda_j^{-1}
+ T^{-1} \lambda_j  \upsilon^{-2}) =O(j^{-1}+j^{-1}(jT^{\alpha-1})^2)
\end{equation*}
where the last bound is $o(j^{-1})$ because $j \leq m$ and $mT^{\alpha-1}\rightarrow0$.
\\
For the next integral we introduce $f^\star(\lambda)'=\frac{\partial f^\star(\lambda)}{\partial \lambda}$, where  
$f^\star(\lambda)'=-(f^\star(\lambda))^2 2\rho \sin(\lambda)$ and
rewrite 
\begin{equation*}
    f(\lambda)-f(\lambda_j)
    =f^\star(\lambda)g(\lambda)
    -f^\star(\lambda_j)g(\lambda)
    +f^\star(\lambda_j)g(\lambda)
    -f^\star(\lambda_j)g(\lambda_j)
\end{equation*}
then 
\begin{align}
& f(\lambda_j)^{-1} \int_{\lambda_j /2}^{ 2\lambda_j} \left(f(\lambda)-f(\lambda_j)\right) K (\lambda-\lambda_j)d \lambda  \nonumber
\\
& \leq C f^\star(\lambda_j)^{-1} \left\{ \sup_{\lambda \in [\lambda_j/2, 2 \lambda_j]} \left|f^\star(\lambda)'\right| \right\}  \int_{ \lambda_j /2}^{2 \lambda_j }  |\lambda - \lambda_j|
\  K (\lambda-\lambda_j) d \lambda\ \label{BoundL4case3}
 \\ 
 & + C
 f^\star(\lambda_j)^{-1}  f^\star(\lambda_j)   
 \int_{ \lambda_j /2}^{2 \lambda_j }  |\lambda - \lambda_j|
\  K (\lambda-\lambda_j) d \lambda\ \label{BoundL4case4}
\end{align}
where notice that 
\begin{equation*}
\left\{ \sup_{\lambda \in [\lambda_j/2, 2 \lambda_j]} \left|f^\star(\lambda)'\right| \right\} \leq C f^\star(\lambda_j)^2 \lambda_j. 
\end{equation*}
 
The bound in \eqref{BoundL4case3} is 
\begin{align*}
& C f^\star(\lambda_j)  \lambda_j  \int_{ \lambda_j /2}^{2 \lambda_j }  |\lambda - \lambda_j|
\  K (\lambda-\lambda_j) d \lambda\   
\leq
C \lambda_j^{-1}  \int_{ \lambda_j /2}^{2 \lambda_j }  |\lambda - \lambda_j|
\  K (\lambda-\lambda_j) d \lambda\   \\
& =O \left((j/T)^{-1}T^{-1} \ln(j+1) \right)=O(j^{-1} \ln(j+1))
\end{align*}
    where we used
\begin{equation*}
\int_{ \lambda_j /2}^{2 \lambda_j }  |\lambda - \lambda_j|
\  K (\lambda-\lambda_j) d \lambda=O \left(T^{-1} \ln(j+1) \right)   
\end{equation*}
 as in \citeasnoun{Robinson95log-p}; proceeding in the same way, 
 the bound in \eqref{BoundL4case4} is 
\begin{equation*}
C   \int_{ \lambda_j /2}^{2 \lambda_j }  |\lambda - \lambda_j|
\  K (\lambda-\lambda_j) d \lambda\   
=O(T^{-1} \ln(j+1)).
\end{equation*}
\end{proof}

\begin{lemma} \label{Daniell_Mildly}
\textit{For $m \rightarrow \infty$, $m \ T^{\alpha-1} \rightarrow 0$ as $T \rightarrow \infty$,
\begin{equation*} 
    \frac{1}{m} \sum_{j=1}^{m} 
    ( f\left(\lambda_j\right)^{-1}
    I\left(\lambda_j\right) - 1 ) 
    =o_p(1)
\end{equation*} 
}
\end{lemma}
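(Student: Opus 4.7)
The plan is to leverage the AR(1) recursion in \eqref{ar1} to reduce the behaviour of $f(\lambda_j)^{-1}I(\lambda_j)$ to that of the normalised periodogram of the weakly dependent innovation $u_t$, for which consistency of the smoothed periodogram is classical. The starting point is the summation-by-parts identity already used in Lemma~\ref{ExpectedPeriodogram} for $c=0$: at every Fourier frequency $e^{i\lambda_j T}=1$, so
\begin{equation*}
(1-\rho_T e^{i\lambda_j})\,w(\lambda_j) = w_u(\lambda_j) + R_j,\qquad R_j := -\frac{\rho_T e^{i\lambda_j}}{\sqrt{2\pi T}}(y_T-y_0).
\end{equation*}
Combined with $f(\lambda_j)=g(\lambda_j)\,|1-\rho_T e^{i\lambda_j}|^{-2}$, this yields
\begin{equation*}
f(\lambda_j)^{-1}I(\lambda_j) = g(\lambda_j)^{-1}I_u(\lambda_j) + g(\lambda_j)^{-1}\bigl(\,2\,\mathrm{Re}\{w_u(\lambda_j)\overline{R_j}\}+|R_j|^2\bigr),
\end{equation*}
so the lemma reduces to two independent tasks: (i) the averaged boundary remainder is $o_p(1)$; (ii) the normalised Daniell average for $u_t$ converges to $1$ in probability.

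For task (i), $E(y_T^2)=O(T^\alpha)$ (consistent with the rates recalled just before the statement of Theorem~\ref{DMforMildly}) gives $E|R_j|^2=O(T^{\alpha-1})$; since $g$ is bounded away from zero under Assumption~\ref{ass2}, the averaged $|R_j|^2$ term is $O_p(T^{\alpha-1})=o_p(1)$ immediately. For the cross term I would factor $R_j = A e^{i\lambda_j}B$ with deterministic $A=O(T^{-1/2})$ and random $B=y_0-y_T=O_p(T^{\alpha/2})$, so that
\begin{equation*}
\sum_{j=1}^m g(\lambda_j)^{-1} w_u(\lambda_j)\overline{R_j} = \bar A\,\bar B\sum_{j=1}^m g(\lambda_j)^{-1} w_u(\lambda_j)\,e^{-i\lambda_j}.
\end{equation*}
The remaining Fourier sum I would control in mean square via the near-orthogonality of $w_u(\lambda_j)$ at distinct Fourier frequencies of a weakly dependent $u_t$: the cross-covariances $E\{w_u(\lambda_j)\overline{w_u(\lambda_k)}\}$ are $O(T^{-1})$ by the Lipschitz smoothness of $g$ from Lemma~\ref{Regarding_Ass2}, so the squared sum is $O(m)+O(m^2/T)=O(m)$. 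Dividing by $m$, the cross-term contribution is then $O_p(T^{(\alpha-1)/2}/\sqrt m)=o_p(1)$ under $m\to\infty$ alone.

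Task (ii) is a standard bias--variance calculation. The bias is obtained by repeating the computation of Lemma~\ref{Periodogram_Mildly} with $g$ in place of $f$; this is in fact the easier case since $g$ is bounded and Lipschitz, and it yields $E\{g(\lambda_j)^{-1}I_u(\lambda_j)\}=1+O(j^{-1}\ln j)$, so that the averaged bias is $O(m^{-1}\ln^2 m)=o(1)$. For the variance I would appeal to the classical fact that for a weakly dependent linear process the normalised periodogram ordinates have bounded on-diagonal variance and $O(T^{-1})$ cross-frequency covariances, giving $\mathrm{Var}\{m^{-1}\sum_j g(\lambda_j)^{-1}I_u(\lambda_j)\}=O(m^{-1})$; Chebyshev then closes the argument.

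The principal obstacle is the cross-term bound in task (i). A naive Cauchy--Schwarz together with Parseval's $\sum_{j=1}^{T/2}I_u(\lambda_j)=O_p(T)$ would yield only $O_p((T^\alpha/m)^{1/2})$, which is not $o_p(1)$ under the stated bandwidth condition (it would require the \emph{opposite} condition $m/T^\alpha\to\infty$). The refined bound above therefore relies essentially on exploiting the orthogonality of Fourier ordinates at distinct Fourier frequencies, which in turn rests on the extra smoothness of $g$ granted by the stronger moving-average summability in Assumption~\ref{ass2} --- and is the reason why that strengthened assumption, rather than Assumption~\ref{ass1}, is imposed throughout Subsection~\ref{Subsec_mildly}.
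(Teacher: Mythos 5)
Your route is genuinely different from the paper's. The paper keeps the strongly autocorrelated series $d_t$ throughout: it normalises $I(\lambda_j)$ directly by $f(\lambda_j)$, feeds the bias bound of Lemma~\ref{Periodogram_Mildly} into the consistency argument of Robinson (1995, Gaussian semiparametric estimation, pp.~1636--1638), and lets that machinery handle the stochastic fluctuations. You instead filter exactly through the AR(1) recursion at the Fourier frequencies, so that $f(\lambda_j)^{-1}I(\lambda_j)=g(\lambda_j)^{-1}|w_u(\lambda_j)+R_j|^2$ with a rank-one boundary remainder $R_j$, and reduce everything to the weakly dependent innovations. This is attractive: it is more self-contained, it makes transparent where the persistence enters (only through $y_T-y_0=O_p(T^{\alpha/2})$), and, as you implicitly observe, it does not appear to use $mT^{\alpha-1}\to 0$ at all (that condition is really needed later, to replace $f(\lambda_j)$ by $f(0)$ in the proof of Theorem~\ref{DMforMildly}). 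Two smaller remarks on task (i): you do need a statement about the initial condition so that $y_0-y_T=O_p(T^{\alpha/2})$, and the ``principal obstacle'' you describe is not actually one --- using the local bound $E\{I_u(\lambda_j)\}=O(1)$ for $j\le m$ rather than global Parseval, a plain Cauchy--Schwarz on the cross term already gives $O_p(T^{(\alpha-1)/2})=o_p(1)$, so the orthogonality refinement, while correct, is unnecessary.

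The genuine gap is in task (ii). Your variance bound $\mathrm{Var}\{m^{-1}\sum_j g(\lambda_j)^{-1}I_u(\lambda_j)\}=O(m^{-1})$ followed by Chebyshev requires the periodogram ordinates to have finite variance, i.e.\ $E(\varepsilon_t^4)<\infty$. Assumption~\ref{ass1} (and hence Assumption~\ref{ass2}) only imposes $E(\varepsilon_t^2)=\varsigma^2$, so $\mathrm{Var}\{I_u(\lambda_j)\}$ need not exist and the ``classical fact'' you invoke is not available under the paper's conditions. This is precisely the step the paper's citation of Robinson (1995) is designed to avoid: there the stochastic part is decomposed further into $m^{-1}\sum_j\bigl(2\pi\varsigma^{-2}I_\varepsilon(\lambda_j)-1\bigr)$ plus approximation errors, and the iid-innovation average is handled by a truncation/martingale mean-square argument that needs only second moments. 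To close your proof you would either have to add a fourth-moment assumption (a genuine strengthening relative to the paper) or splice Robinson's decomposition of $g(\lambda_j)^{-1}I_u(\lambda_j)$ into your task (ii) at this point --- which is doable, and would then make your argument a clean, slightly more general alternative to the paper's.
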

\begin{proof}
The proof follows as on pages 1636-1638 of \citeasnoun{Robinson95LW} using the bound from Lemma \ref{Periodogram_Mildly}. The other bounds in (3.17) of \citeasnoun{Robinson95LW} can be computed adapting arguments in Theorem 2 of \citeasnoun{Robinson95log-p} as we did for Lemma \ref{Periodogram_Mildly}.
\end{proof}

\textbf{Proof of Theorem}~\ref{DMforMildly}  - \textit{Case P} 
\begin{proof} We rewrite $\sqrt{T}\frac{\overline{d}-\mu}
   {\widehat{\sigma}_P}$ as
   \begin{equation*}
       \sqrt{T}\frac{\overline{d}-\mu}
   {\widehat{\sigma}_P}
  =\frac{\sqrt{({2 \pi f(0)})^{-1}}
   \sqrt{T}(\overline{d}-\mu)}
   {\sqrt{({2 \pi f(0)})^{-1}}
   \sqrt{ \frac{2 \pi}{m} \sum_{j=1}^m{I(\lambda_j)} }}
   \end{equation*} 
   and recall ${2 \pi f(0)}=\upsilon^{-2} \omega^2 =
   (-c)^{-2} T^{2 \alpha} \omega^2$. \\ From Theorem 2.1 of \citeasnoun{GiraitisPhillips2012Local}, $\sqrt{({2 \pi f(0)})^{-1}}
   \sqrt{T}(\overline{d}-\mu)
   \rightarrow_d N(0,1)$ .
   \\
As for the denominator, we discuss the cases 
$m T^{\alpha-1} \rightarrow 0$ and $m T^{\alpha-1} \rightarrow \infty$ separately.\\
We begin with $m T^{\alpha-1} \rightarrow 0$.
\\ 
We rewrite the argument of the square root of the denominator as 
\begin{align*}
 &    f(0)^{-1} \frac{1}{m} \sum_{j=1}^m{I(\lambda_j)}
    =     f(0)^{-1} \frac{1}{m} 
    \sum_{j=1}^m{ f(\lambda_j)^{-1} I(\lambda_j)f(\lambda_j)} 
    =
    f(0)^{-1} \frac{1}{m} 
    \sum_{j=1}^m{ (f(\lambda_j)^{-1} I(\lambda_j)-1+1)f(\lambda_j)} \\
&    =f(0)^{-1} \frac{1}{m} 
    \sum_{j=1}^m{ (f(\lambda_j)^{-1} I(\lambda_j)-1)} f(\lambda_j)
    +f(0)^{-1} \frac{1}{m} 
    \sum_{j=1}^m{f(\lambda_j)}.
\end{align*}
The first term is bounded as
\begin{align}
    & \left|f(0)^{-1} \frac{1}{m} 
    \sum_{j=1}^m{ (f(\lambda_j)^{-1} I(\lambda_j)-1)}f(\lambda_j)\right| \nonumber   \\
&    \leq f(0)^{-1} \frac{1}{m} 
    \sum_{j=1}^{m-1}
{|f(\lambda_j)-f(\lambda_{j+1})| \left|\sum_{k=1}^{j} (f(\lambda_k)^{-1} I(\lambda_k)-1)\right|}+
 f(0)^{-1} f(\lambda_m) \frac{1}{m}\left|\sum_{j=1}^{m} {(f(\lambda_j)^{-1} I(\lambda_j)-1)}\right| \label{Thm3Bound1} 
\end{align}
The first term of \eqref{Thm3Bound1} has the same order as 
\begin{equation}
\label{Thm3Bound2}
f(0)^{-1} \frac{1}{m} 
    \sum_{j=1}^{m-1}
{|f^\star(\lambda_j)'| \frac{1}{T} \left|\sum_{k=1}^{j} (f(\lambda_k)^{-1} I(\lambda_k)-1)\right|} 
+
f(0)^{-1} \frac{1}{m} 
    \sum_{j=1}^{m-1}
f^\star(\lambda_j){ \frac{1}{T} \left|\sum_{k=1}^{j} (f(\lambda_k)^{-1} I(\lambda_k)-1)\right|} 
\end{equation}
 the first element in \eqref{Thm3Bound2} 
 has order 
\begin{align*}
&    f(0)^{-1} \frac{1}{m} 
    \sum_{j=1}^{m-1}
{|(f^\star(\lambda_j))'| \frac{1}{T} j \ j^{-1} \left|\sum_{k=1}^{j} (f(\lambda_k)^{-1} I(\lambda_k)-1)\right|} \\
 &   \leq C
    f^\star(0)^{-1} \frac{1}{m} 
    \sum_{j=1}^{m-1}
{f^\star(0) \lambda_j^{-2} \lambda_j \frac{1}{T} j \ j^{-1} \left|\sum_{k=1}^{j} (f(\lambda_k)^{-1} I(\lambda_k)-1)\right|} 
=o_p\left(\frac{1}{m} 
    \sum_{j=1}^{m-1}
{\lambda_j^{-1} \frac{1}{T} j}\right)
=o_p(1)
\end{align*}
where we used the bound $|f'^\star(\lambda_j)'| 
\leq C f^\star(\lambda_j)^2 \lambda_j 
\leq C f^\star(0) f^\star(\lambda_j) \lambda_j
\leq C f^\star(0) \lambda_j^{-2} \lambda_j $ and Lemma \ref{Daniell_Mildly}; the second element in \eqref{Thm3Bound2} 
 has order 
\begin{align*}
&    f(0)^{-1} \frac{1}{m} 
    \sum_{j=1}^{m-1}
f^\star(\lambda_j){ \ j \ j^{-1} \frac{1}{T} \left|\sum_{k=1}^{j} (f(\lambda_k)^{-1} I(\lambda_k)-1)\right|}  \\
 &   \leq C
     \frac{1}{m} 
    \sum_{j=1}^{m-1}
 { \frac{1}{T} j \ j^{-1} \left|\sum_{k=1}^{j} (f(\lambda_k)^{-1} I(\lambda_k)-1)\right|} 
=o_p\left(\frac{1}{m} 
    \sum_{j=1}^{m-1}
{ \frac{1}{T} j}\right)
=o_p\left(\frac{m}{T}\right)=o_p(1).
\end{align*}
so the bound in \eqref{Thm3Bound2} is $o_p(1)$. The second term in \eqref{Thm3Bound1} can be discussed in the same way, thus establishing 
that the bound in \eqref{Thm3Bound1}  is  $o_p(1)$. \\
To complete the discussion of the denominator when $m \ T^{\alpha-1} \rightarrow 0$, we need to consider the term $f(0)^{-1} \frac{1}{m} 
\sum_{j=1}^m f(\lambda_j)$. 
This is 
\begin{equation*}
    f(0)^{-1} \frac{1}{m} 
\sum_{j=1}^m f(\lambda_j)
=
    f(0)^{-1} \frac{1}{m} 
\sum_{j=1}^m (f(\lambda_j)-f(0))+1
\end{equation*}
and notice that
\begin{equation*} 
f(0)^{-1} \frac{1}{m} \sum_{j=1}^m |f(\lambda_j)-f(0)| 
\leq C
f(0)^{-1} \frac{1}{m}
\sum_{j=1}^m |\lambda_j^2 f^\star(\lambda_j) \upsilon^{-2}|
+
C
f(0)^{-1} \frac{1}{m}
\sum_{j=1}^m |\lambda_j^2  \upsilon^{-2}|
=O(T^{2\alpha-2 }m^2)=o(1),
\end{equation*} 
Combining these results, 
\begin{equation*}
f(0)^{-1} \frac{1}{m} \sum_{j=1}^m{I(\lambda_j)}  \rightarrow_p 1,
\end{equation*}
completing the discussion for the case 
$T^{\alpha-1} \ m \rightarrow 0$. \\
\\
For the $T^{\alpha-1} \ m \rightarrow \infty$, again we only need to consider the denominator. The argument of the square root in this case, is 
\begin{equation*}
\{(-c)^2 T^{-2 \alpha} \omega^2\}^{-1}    
   \frac{2 \pi}{m}
   \sum_{j=1}^m{I(\lambda_j)} 
\end{equation*}
and notice that 
\begin{equation*}
    \frac{2 \pi}{T \ T^{\alpha} }\sum_{j=1}^m{I(\lambda_j)}=
     \frac{1}{2}
     \frac{1}{T^{1+\alpha}} \sum_{t=1}^T {(d_t-\overline{d})^2}
     -\frac{2 \pi}{T \ T^{\alpha} }\sum_{j=m+1}^{T/2}{I(\lambda_j)}
\end{equation*}
From Lemma \ref{BoundPeriodogramMildly},
\begin{equation}
    \frac{2 \pi}{T \ T^{\alpha} }\sum_{j=m+1}^{T/2}{I(\lambda_j)}=
    O_p \left( \frac{1}{T \ T^{\alpha} }\sum_{j=m+1}^{T/2}{\lambda_j^{-2}} \right)=O_p \left( T^{1-\alpha} m^{-1} \right)=o_p(1)
\end{equation}
The application of Lemma \ref{LimitVariance} for $\frac{1}{T^{1+\alpha}} \sum_{t=1}^T {(d_t-\overline{d})^2}$ completes the proof.
\end{proof}

\textbf{Proof of Theorem}~\ref{DMforMildly}  - \textit{Case A} 
\begin{proof}
The proof uses the same decomposition \eqref{gamma_l} and the limit in Lemma \ref{LimitVariance}; then, for all $l \leq M$
\begin{equation*}
\frac{1}{T^{\alpha}} \widehat{\gamma}_l=\frac{1}{T^{1+\alpha}} \sum_{t=1}^T (y_t-\overline{y})+O_p(M/T^\alpha)+O_p(M^{1/2}/T^{\alpha/2})
\end{equation*} 
and therefore
\begin{equation*}
\frac{1}{M T^{\alpha}}\widehat{\sigma}_A^2 \rightarrow_p \frac{\omega^2}{-c} 
\end{equation*}
The statement of the theorem follows from the CLT for $\overline{d}$ as in the proof for Case P and the continuous mapping theorem.
\end{proof}

\section{Additional plots}\label{app_plots}
In this appendix, we report additional plots for the empirical application. In particular, Figure~\ref{fig:err} plots the realised forecast errors from  AR(1) forecasts along with the $2\%$  and the rolling average benchmarks for forecasting horizons 2, 4, 6 and 8 quarters. As in the main part of the paper, forecast errors are defined as the realised value minus the prediction. The figure indicates that, for short forecasting horizons, AR(1) forecasts are more accurate and have less persistent forecast errors than the benchmarks. The realised losses of the three forecasts using a quadratic loss function reported in Figure~\ref{fig:loss} confirm that, for short forecasting horizons, AR(1) forecasts are more precise than the benchmarks and have losses that are not too correlated. However, when we look at the loss differentials reported in Figure~\ref{fig:lossd}, we see that they inherit the dependence properties of the benchmarks and display relevant autocorrelations,
even at short forecasting horizons.

\begin{figure}[H]
\caption{Realised forecast errors \bigskip} 
\includegraphics[trim={1.5cm 8cm 1cm 8cm},clip,scale=0.9]{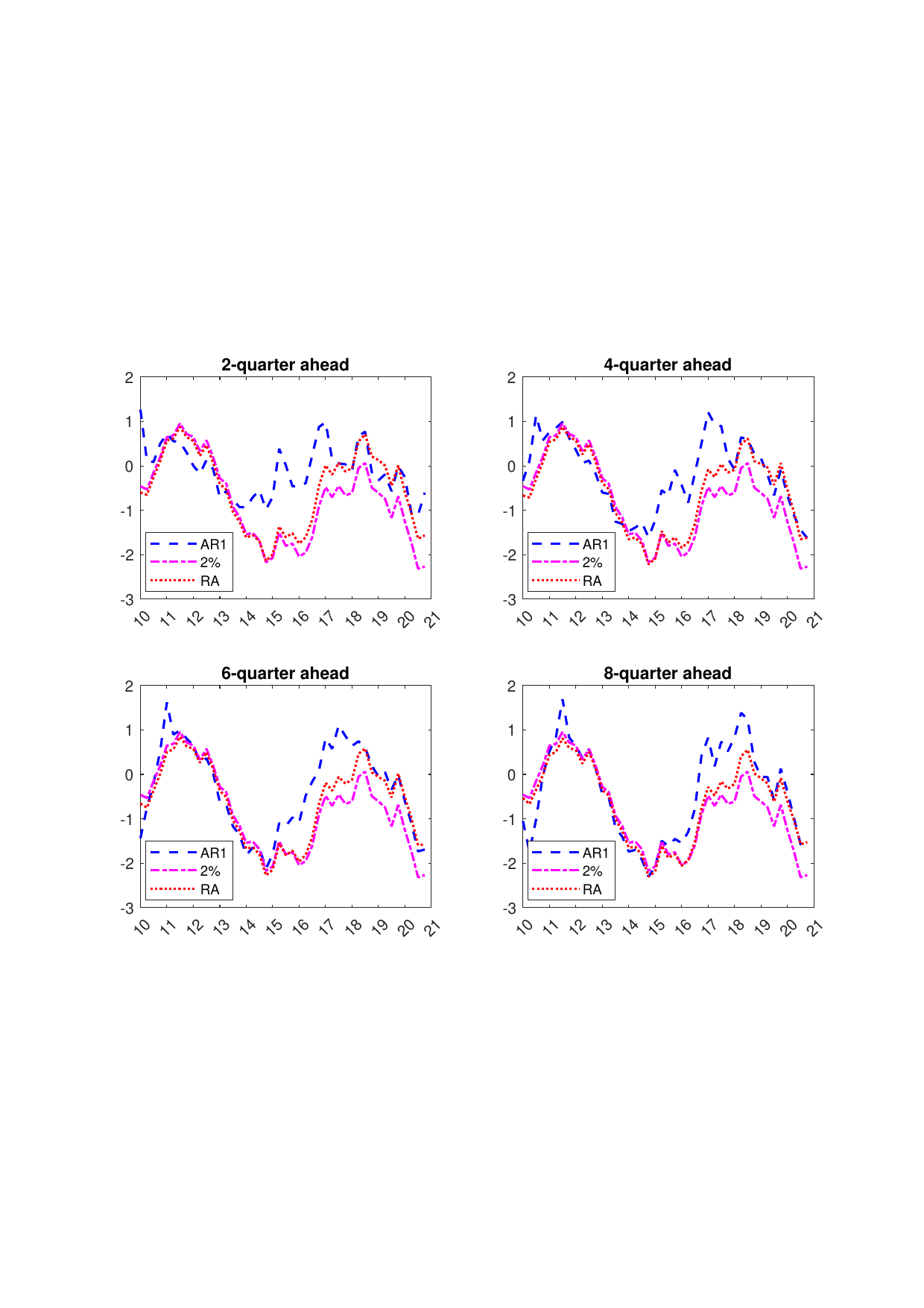}
\label{fig:err}
Note: realised forecast errors for AR(1) forecasts(AR1), along with the $2\%$  ($2\%$) and the rolling average (RA) benchmarks for forecasting horizons 2, 4, 6 and 8 quarters. Forecast errors are defined as the realised value minus the prediction.
\end{figure}

\begin{figure}[H]
\caption{Realised forecast losses \bigskip} 
\includegraphics[trim={1.5cm 8cm 1cm 8cm},clip,scale=0.9]{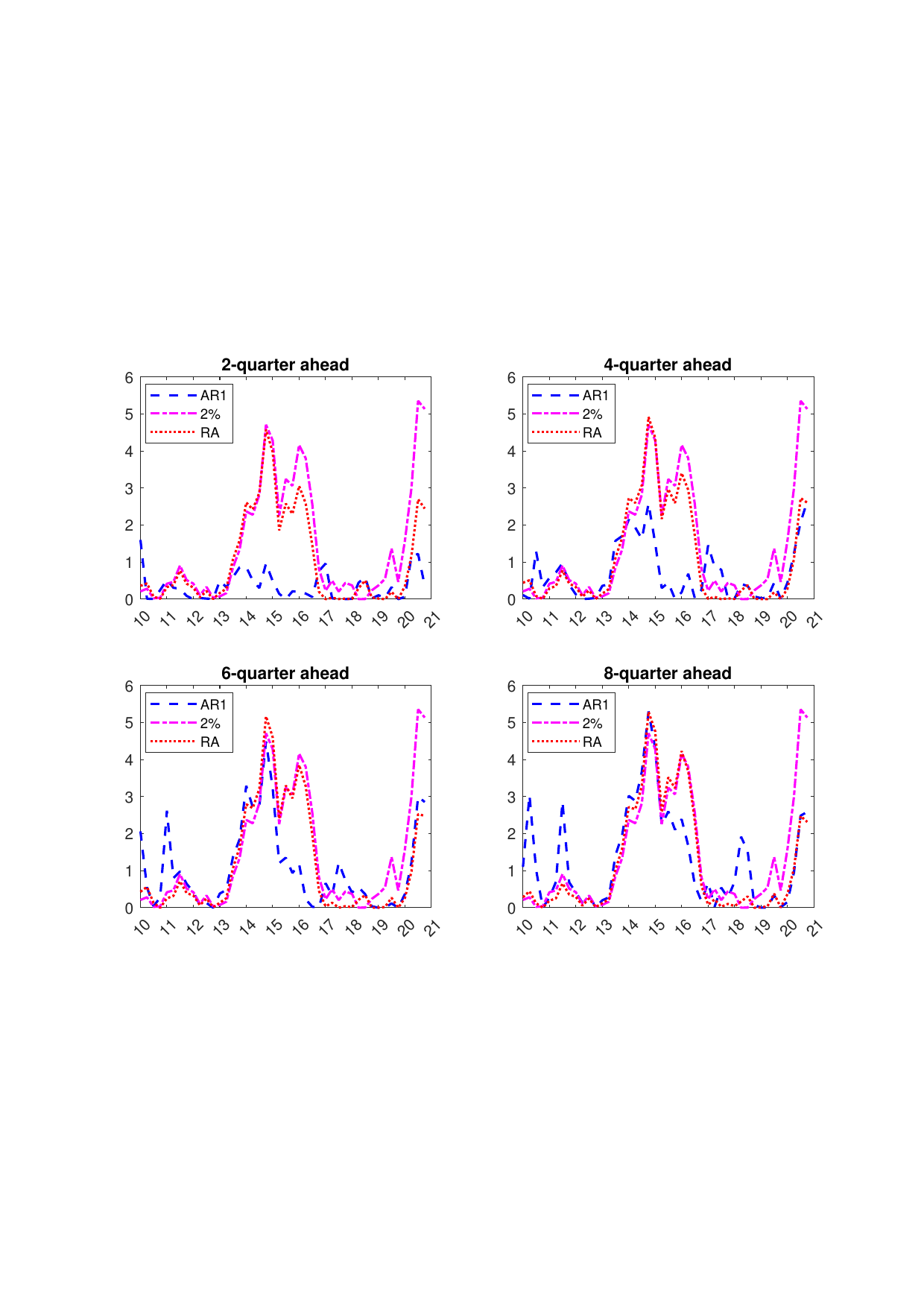}
\label{fig:loss}
Note: realised forecast losses for AR(1) forecasts (AR1), along with the $2\%$  ($2\%$) and the rolling average (RA) benchmarks for forecasting horizons 2, 4, 6 and 8 quarters. The loss function is quadratic.
\end{figure}

\begin{figure}[H]
\caption{Realised loss differentials \bigskip} 
\includegraphics[trim={1.5cm 8cm 1cm 8cm},clip,scale=0.9]{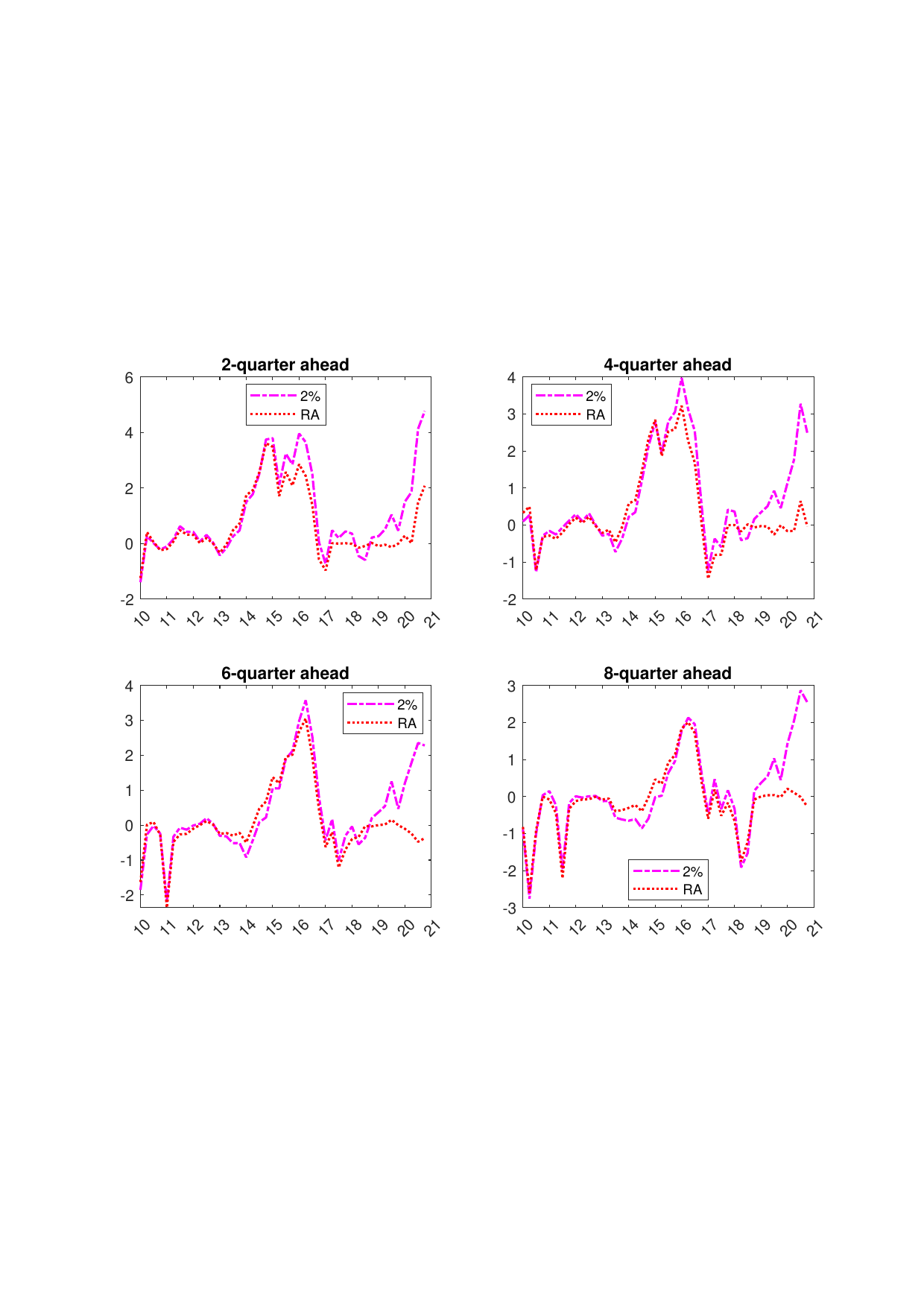}
\label{fig:lossd}
Note: realised loss differentials for AR(1) forecasts (AR1) with respect to the $2\%$  ($2\%$) and the rolling average (RA) benchmarks for forecasting horizons 2, 4, 6 and 8 quarters. The loss function is quadratic, and the loss differential is computed
as the loss of the benchmark minus the loss of the AR(1) forecast.
\end{figure}
\end{document}